\newcommand{\Diff}{${\mathcal R}$}
\newcommand{\If}{\leftarrow}
\newcommand{\vars}{\mathit{vars}}
\newcommand{\repl}{\! \Rightarrow \!}
\newcommand{\Embedded}
{\mbox{\hspace*{.5mm}\raisebox{-1.1mm}{$\sim$}\hspace*{-2.5mm}\raisebox{.2mm}{\large{$\triangleleft$}}\hspace*{.9mm}}}
\newcommand{\Down}{\raisebox{-.5mm}{\rule{0mm}{1mm}}}
\title{Removing Algebraic Data Types\\ from Constrained Horn Clauses\\ Using Difference Predicates\thanks{This work has been partially supported by the National Group of Computing Science (GNCS-INdAM). 
This paper has been published in:
Automated Reasoning,
10th International Joint Conference, IJCAR 2020, Paris, France, July 1--4, 2020, 
Proceedings, Lecture Notes in Artificial Intelligence no.12166 Part~I,
N.~Peltier and V.~Sofronie-Stokkermans (Eds.), 
pp.~83--102, 2020. Springer Nature.
The final authenticated publication is available online at \url{https://doi.org/10.1007/978-3-030-51074-9_6}}}
\titlerunning{Removing ADTs from CHCs using Difference Predicates}
\author{
Emanuele De Angelis\inst{1,3}\orcidID{0000-0002-7319-8439}\and
Fabio~Fioravanti\inst{1}\orcidID{0000-0002-1268-7829} \and
Alberto~Pettorossi\inst{2,3}\orcidID{0000-0001-7858-4032}\and
Maurizio~Proietti\inst{3}\orcidID{0000-0003-3835-4931}
}
\authorrunning{E.~De~Angelis, F.~Fioravanti, A.~Pettorossi, and M.~Proietti}
\institute{DEC, University `G. d'Annunzio', Chieti-Pescara, Italy\\
	\email{fabio.fioravanti@unich.it}\\
	\and DICII, University of Rome `Tor Vergata', Italy\\
	\email{pettorossi@info.uniroma2.it} \\
	\and IASI-CNR, Rome, Italy \\
	\email{\{emanuele.deangelis,maurizio.proietti\}@iasi.cnr.it}
}
\begin{document}
	\maketitle
	\begin{abstract}

We address the problem of proving the satisfiability of 
Constrained Horn Clauses (CHCs) with Algebraic Data Types (ADTs), 
such as lists and trees.
We propose a new technique for transforming CHCs with 
ADTs into CHCs where predicates are defined over basic types, such as
integers and booleans, only.
Thus, our technique avoids the explicit use of inductive proof rules
during satisfiability proofs.
The main extension over previous techniques for ADT removal is 
a new transformation rule,
called {\it differential replacement}, which allows us to introduce
auxiliary predicates corresponding to the lemmas used
when making inductive proofs.
We present an algorithm that applies the new 
rule, together with the traditional folding/unfolding rules, 
for the automatic removal of ADTs.
We prove that if the set of the transformed clauses is satisfiable, 
then so is the set of the original clauses.
By an experimental evaluation, 
we show that the use of the new rule 
significantly improves the effectiveness of ADT removal, and
that our approach is competitive 
with respect to a state-of-the-art tool that extends the CVC4 solver with induction.

\end{abstract}
	
	\section{Introduction}
	
	\label{sec:Intro}
	\label{intro}

{\it Constrained Horn Clauses} (CHCs) constitute a fragment of
the first order predicate calculus, where the Horn clause format is extended 
by allowing {\it constraints} on specific domains
to occur in clause premises. CHCs have gained popularity as 
a suitable logical formalism for automatic program verification~\cite{Bj&15}.
Indeed, 
many verification problems can be reduced to the satisfiability 
problem for CHCs.

Satisfiability of CHCs is a particular case of
{\it Satisfiability Modulo Theories} (SMT), understood here as the general 
problem of determining the satisfiability of (possibly quantified) 
first order formulas where the interpretation of some
function and predicate symbols is defined in
a given constraint (or {\it background}) theory~\cite{BaT18}.
Recent advances in the field have led to the development of a number of
very powerful SMT 
(and, in particular, CHC) {\it solvers}, which
aim at solving satisfiability problems with respect to a large
variety of constraint theories.
Among SMT solvers, we would like to mention
CVC4~\cite{CVC4}, MathSAT~\cite{MaS13}, and Z3~\cite{DeB08},
and among solvers with specialized engines for CHCs,
we recall Eldarica~\cite{HoR18}, HSF~\cite{Gr&12}, RAHFT~\cite{Ka&16}, and Spacer~\cite{Ko&13}.

{Even if SMT algorithms for unrestricted first order formulas 
suffer from incompleteness limitations due to general undecidability results,} 
most of the above mentioned tools work well {in practice} {when acting}
on constraint theories, such as 
Booleans, Uninterpreted Function Symbols, Linear Integer or Real Arithmetic, Bit Vectors, and
Arrays.
However, when formulas contain {universally quantified variables}
{ranging over} inductively defined {\it algebraic data types} (ADTs),
such as lists and trees, then 
{the SMT/CHC solvers often show a very poor performance, as they do not incorporate  
induction principles relative to the ADT in use.}

To {mitigate} this difficulty, some SMT/CHC solvers 
have been enhanced by incorporating appropriate
induction principles~\cite{ReK15,Un&17,Ya&19},
similarly to what has been done in automated theorem provers~\cite{Bun01}.
The most creative step which is needed when extending SMT solving 
with induction,
is the generation of the auxiliary
lemmas that are required for proving the main conjecture.

An alternative approach, proposed in the context of CHCs~\cite{De&18a}, 
consists in transforming a given set of clauses 
into a new set: (i) where all ADT terms are removed 
{(without introducing new function symbols)},
and (ii)~whose satisfiability implies the satisfiability 
of the original set of clauses.
This approach has the advantage of separating the concern of dealing
with ADTs (at transformation time) from the concern of dealing with
simpler, non-inductive constraint theories (at solving time), thus
avoiding the complex interaction between inductive reasoning and 
constraint solving.
It has been shown~\cite{De&18a} that the transformational approach 
compares well with induction-based tools
in the case where lemmas are not needed in the proofs.
However, in some satisfiability problems, if suitable lemmas are 
not provided, the transformation fails to remove the ADT terms.


The main contributions of this paper are as follows.

\noindent \hangindent=6mm 
(1) We extend the transformational approach by proposing a new rule, called 
{\it differential replacement}, based on the introduction of suitable 
{\it difference predicates},
which play a role similar to that of lemmas in inductive proofs.
We prove that the combined use of the fold/unfold transformation rules~\cite{EtG96}
and the differential replacement rule is {\it sound}, that is,
if the transformed set of clauses is satisfiable, then 
the original set of clauses is satisfiable.
	
\noindent \hangindent=6mm 
(2) We develop a transformation algorithm that removes ADTs from CHCs 
by applying the fold/unfold and the differential replacement
rules in a fully automated way. 
	
\noindent \hangindent=6mm 
(3)~{Due to the undecidability of the satisfiability problem for CHCs, 
our technique for ADT removal is incomplete. Thus, we evaluate  
its effectiveness from an experimental
point of view, and in particular we discuss the results obtained by the 
implementation of our technique in a tool, called {\sc AdtRem}}. 
We consider a set of CHC satisfiability problems on ADTs
taken from various benchmarks which are used for evaluating inductive theorem provers.
The experiments show that {\sc AdtRem} is competitive
with respect to Reynolds and Kuncak's tool that augments the CVC4 solver with  inductive reasoning~\cite{ReK15}.

\smallskip

\noindent
The paper is structured as follows. 
In Section~\ref{sec:IntroExample} we present an introductory, motivating example.
In Section~\ref{sec:CHCs} we recall some
basic notions about CHCs.
In Section~\ref{sec:TransfRules} we introduce the rules used
in our transformation technique and, in particular, the novel 
differential replacement rule, and we show their soundness.
In Section~\ref{sec:Strategy} we present a transformation algorithm
that uses the transformation rules for removing ADTs from sets of CHCs.
In Section~\ref{sec:Experiments} we illustrate the {\sc AdtRem} tool
and we present the experimental results we have obtained.
Finally, in Section~\ref{sec:RelConcl} we discuss the related work and
make some concluding remarks.

	\section{A Motivating Example}
	\label{sec:IntroExample}
	Let us consider the following functional program $\mathit{Reverse}$,
which we write using the OCaml syntax~\cite{Le&17}:


{\small
\begin{verbatim}
  type list = Nil | Cons of int * list;;
  let rec append l ys = match l with
    | Nil -> ys     | Cons(x,xs) -> Cons(x,(append xs ys));;
  let rec rev l = match l with
    | Nil -> Nil    | Cons(x,xs) -> append (rev xs) (Cons(x,Nil));;
  let rec len l = match l with
    | Nil -> 0      | Cons(x,xs) -> 1 + len xs;;
\end{verbatim}
}


\noindent
The  functions {\tt append}, {\tt rev}, and {\tt len}
compute list concatenation, list reversal, and list length, respectively.
Suppose we want to prove the following property:

{\small
$\mathtt{\forall}$ 
{\tt xs,ys.\ len\ (rev\ (append\ xs\ ys))\ =\ (len\ xs)\ +\ (len\ ys)} 
\hfill$(1)$\hspace*{10mm}
}

\vspace{.5mm}

\noindent
Inductive theorem provers construct a proof of
this property by induction on the structure of the list $\mathtt {l}$, by assuming 
the knowledge of the following lemma:

\vspace{.5mm}

{\small
$\mathtt{\forall}$ 
{\tt x,l. len (append\ l\ (Cons(x,Nil)))\ =\ (len\ l)\ +\ 1} 
\hfill $(2)$\hspace*{10mm}
}

\vspace{.5mm}

\noindent
The approach we follow in this paper avoids the explicit
use of induction principles and also the knowledge of {\it ad hoc} lemmas.
First, we consider the translation of Property~($1$) into a set of 
constrained Horn clauses~\cite{De&18a,Un&17}, as follows\footnote{In the examples,
we use Prolog syntax for writing clauses, instead of the more verbose
SMT-LIB syntax. The predicates {\tt \textbackslash=} (different from), {\tt =} (equal to), {\tt <} (less-than),
{\tt >=}~(greater-than-or-equal-to)
denote constraints between integers. The last argument of a Prolog 
predicate stores the value of the corresponding function.}:



{\small
\begin{verbatim}
1. false :- N2\=N0+N1, append(Xs,Ys,Zs), rev(Zs,Rs), 
            len(Xs,N0), len(Ys,N1), len(Rs,N2).
\end{verbatim}
\begin{verbatim}
2. append([],Ys,Ys).   3. append([X|Xs],Ys,[X|Zs]) :- append(Xs,Ys,Zs).
4. rev([],[]).         5. rev([X|Xs],Rs) :- rev(Xs,Ts), append(Ts,[X],Rs).
6. len([],N) :- N=0.   7. len([X|Xs],N1) :- N1=N0+1, len(Xs,N0).
\end{verbatim}
}


\noindent
%
%
%
%
The set of clauses {\tt 1}--{\tt 7} is satisfiable if and only if 
Property~$(1)$ holds. 
However, state-of-the-art CHC solvers, such as Z3 or Eldarica, 
fail to prove the satisfiability of clauses {\tt 1}--{\tt 7},
because those solvers do not incorporate any induction principle on lists.
Moreover, some tools that extend SMT solvers  
with induction~\cite{ReK15,Un&17} fail on this particular example
because they are not able to introduce Lemma~$(2)$.

To overcome this difficulty, we apply the transformational 
approach based on the fold/unfold rules~\cite{De&18a}, whose objective is to 
transform a given set of clauses into a new set without occurrences of list variables,
whose satisfiability can be checked by
using CHC solvers based on the theory of Linear Integer Arithmetic only.
The soundness of the transformation rules ensures that the satisfiability of the transformed clauses
implies the satisfiability of the original ones.
We apply the {\em Elimination Algorithm}~\cite{De&18a} as follows.
First, we introduce a new clause:


{\small
\begin{verbatim}
8. new1(N0,N1,N2) :- append(Xs,Ys,Zs), rev(Zs,Rs), 
                     len(Xs,N0), len(Ys,N1), len(Rs,N2).
\end{verbatim}
}


\noindent
whose body is made out of the atoms of clause~{\tt 1} which have at least
one list variable, and whose head arguments are 
the integer variables of the body.
By folding, \raisebox{-1.5mm}{\rule{0mm}{0mm}} 
from clause~1 we derive a new clause without occurrences 
of lists:\nopagebreak


{\small
\begin{verbatim}
9. false :- N2\=N0+N1, new1(N0,N1,N2).
\end{verbatim}
}


\noindent
We proceed by eliminating lists from clause~{\tt 8}. 
By unfolding clause~{\tt 8}, we replace some predicate calls by their 
definitions and we derive the two new clauses:


{\small
\begin{verbatim}
10. new1(N0,N1,N2) :- N0=0, rev(Zs,Rs), len(Zs,N1), len(Rs,N2).
11. new1(N01,N1,N21) :- N01=N0+1, append(Xs,Ys,Zs), rev(Zs,Rs), 
                len(Xs,N0), len(Ys,N1), append(Rs,[X],R1s), len(R1s,N21).
\end{verbatim}
}



\noindent
We would like to fold clause~{\tt 11} using clause~{\tt 8}, so as to derive a 
recursive definition of
{\tt new1} without lists.
Unfortunately, this folding step cannot be performed because
the body of clause~{\tt 11} does not contain a variant of
the body of clause~{\tt 8}, and hence
the Elimination Algorithm fails to eliminate lists in this example.

Thus, now we depart from the Elimination Algorithm and 
we continue our derivation by observing that
the body of clause~{\tt 11} contains the 
{\em subconjunction} `{\small{\tt append(Xs,Ys,Zs),} {\tt rev(Zs,Rs), len(Xs,N0), 
len(Ys,N1)}}' 
of the body
of clause~{\tt 8}.
Then, in order to find a variant of the whole body of clause~{\tt 8}, we may replace 
in clause~{\tt 11} the remaining subconjunction 
`{\small{\tt  append(Rs,[X],R1s), len(R1s,N21)}}'  by the new subconjunction
`{\small{\tt len(Rs,N2), diff(N2,X,N21)}}', where {\small{\tt diff}}
is a predicate, called {\em difference predicate}, 
defined as follows:


{\small
\begin{verbatim}
12. diff(N2,X,N21) :- append(Rs,[X],R1s), len(R1s,N21), len(Rs,N2).
\end{verbatim}
}


\noindent
From clause~{\tt 11}, by performing that replacement, 
we derive the following clause:\nopagebreak


{\small
\begin{verbatim}
13. new1(N01,N1,N21) :- N01=N0+1, append(Xs,Ys,Zs), rev(Zs,Rs), 
                len(Xs,N0), len(Ys,N1), len(Rs,N2), diff(N2,X,N21).
\end{verbatim}
}


\noindent
Now, we can fold clause~{\tt 13} using clause~{\tt 8}
and we derive a new clause without list arguments:


{\small
\begin{verbatim}
14. new1(N01,N1,N21) :- N01=N0+1, new1(N0,N1,N2), diff(N2,X,N21).  
\end{verbatim}
}


\noindent
At this point, we are left with the task of removing list arguments from 
clauses~{\tt 10} and~{\tt 12}.
As the reader may verify, this can be done by applying
the Elimination Algorithm without the need of introducing additional
 difference predicates.
By doing so, we get the following final set of clauses without list arguments:


{\small
\begin{verbatim}
false :- N2\=N0+N1, new1(N0,N1,N2).
new1(N0,N1,N2) :- N0=0, new2(N1,N2).
new1(N0,N1,N2) :- N0=N+1, new1(N,N1,M), diff(M,X,N2).
new2(M,N) :- M=0, N=0.
new2(M1,N1) :- M1=M+1, new2(M,N), diff(N,X,N1).
diff(N0,X,N1) :- N0=0, N1=1.
diff(N0,X,N1) :- N0=N+1, N1=M+1, diff(N,X,M).
\end{verbatim}
}


{The Eldarica CHC solver proves the satisfiability of this set of clauses by computing 
the following model 
(here we use a Prolog-like syntax):}


{\small
\begin{verbatim}
new1(N0,N1,N2) :- N2=N0+N1, N0>=0, N1>=0, N2>=0.
new2(M,N) :- M=N, M>=0, N>=0.
diff(N,X,M) :- M=N+1, N>=0.
\end{verbatim}
}

Finally, we note that if in clause~12  we substitute the atom
{\small{\tt diff(N2,X,N21)}} by its model computed by Eldarica, namely the constraint `{\small{\tt N21=N2+1, N2>=0}}', 
we get exactly the CHC translation of Lemma~$(2)$. 
Thus, in some cases, the introduction of the difference predicates can be
viewed as a way of automatically introducing the
lemmas needed when performing inductive proofs.



	\section{Constrained Horn Clauses}
	\label{sec:CHCs}
	Let \textit{LIA} be the theory of linear integer arithmetic and
\textit{Bool} be the theory of boolean values.
A {\em constraint} is a quantifier-free formula of $\textit{LIA}\cup\textit{Bool\/}$. 
Let~$\mathcal C$ denote the set of all constraints.
Let $\mathcal L$ be a typed first order language with equality~\cite{End72}
which includes the language of $\textit{LIA}\cup\textit{Bool\/}$.
Let $\textit{Pred}$ be a set of predicate
symbols in $\mathcal L$ not occurring in the language of $\textit{LIA}\cup\textit{Bool\/}$.


The integer and boolean types are said to be the {\it basic types}.
For reasons of simplicity we do not consider any other basic types,
such as real number, arrays, and bit-vectors, which are 
usually supported by SMT solvers~\cite{CVC4,DeB08,HoR18}.
We assume that all non-basic types 
are specified  by suitable data-type declarations (such as the 
{\small{\tt declare-datatypes}} declarations adopted by SMT solvers),
and are collectively called {\it algebraic data types} (ADTs). 

An {\it atom} is a formula of the form $p(t_{1},\ldots,t_{m})$,
where~$p$ is a typed predicate symbol in $\textit{Pred}$, and 
$t_{1},\ldots,t_{m}$ are typed terms constructed out of individual
variables, individual constants,  and function symbols.
A~{\it constrained Horn clause}  (or simply, a {\it clause}, or a CHC) is 
an implication of the form  
$A\leftarrow c, B$ (for clauses we use the logic programming notation, where
comma denotes conjunction). The conclusion (or {\it head\/}) $A$ 
is either an atom or \textit{false}, 
the premise (or {\it body\/}) is the conjunction of
a constraint  $c\in\mathcal C$, and a (possibly empty) conjunction~$B$ of atoms. 
If $A$ is an atom of the form $p(t_{1},\ldots,t_{n})$, the predicate $p$
is said to be a {\it head predicate}.
A clause whose head is an atom is called a {\it definite clause},
and a clause whose head is {\it false} is called a {\it  goal\/}.

We assume  that, for every atom $A$ which is the head of a clause, 
(i)~each term of basic type occurring in $A$ is a variable, and
(ii)~no variable of basic type occurs in $A$ more than once.
For instance, atom {\small{\tt p(X,[Y\,|\,T])}} may occur as a head, 
while by Condition~(i), the atoms {\small{\tt p(3,[Y\,|\,T])}}
and {\small{\tt p(X,[Y+Z\,|\,T])}} may not.
Conditions~(i) and (ii) on head atoms can always be enforced
at the expense of introducing new variables subject to constraints in the body of the clause.
These conditions ensure that, when applying the  unfolding rule (see Section~\ref{sec:TransfRules}),
the unification of terms of basic type
can be delegated to constraint solving.

We assume that all variables in a clause are universally quantified in front, and thus
we can freely rename those variables.
Clause $C$ is said to be a {\it variant}
of clause $D$ if $C$ can be obtained from $D$ by renaming variables 
and rearranging the order of the atoms in its body. 
Given a term~$t$, by ${\it vars}(t)$ we denote the set of all 
variables occurring in $t$.
Similarly {for the set of all free variables occurring in a formula}.
Given a formula $\varphi$ in ${\mathcal L}$, we denote by
 $\forall (\varphi)$ its {\it universal closure}. 



Let~$\mathbb D$ be the usual interpretation for the symbols in
$\textit{LIA}\cup\textit{Bool\/}$, and let a~\mbox{\it ${\mathbb D}$-interpretation} be an interpretation of~$\mathcal L$
that, for all symbols occurring in~$\textit{LIA}\cup\textit{Bool\/}$, agrees with
${\mathbb D}$.
%
%
%

A set $P$ of CHCs is {\it satisfiable} if
it has a ${\mathbb D}$-model and it is {\em unsatisfiable}, otherwise.
Given two {${\mathbb D}$-interpretations $\mathbb I$ and $\mathbb J,$
we say that $\mathbb I$ is {\em included} in $\mathbb J$
if for all ground atoms~$A$, $\mathbb I\models A$ implies $\mathbb J\models A$.}
Every set $P$ of definite  clauses is {satisfiable} and has a {\it least} 
(with respect {to inclusion}) ${\mathbb D}$-model,
denoted $M(P)$~\cite{JaM94}.
If $P$ is any set of constrained Horn clauses and
$Q$ is the set of the goals in~$P$, then we define
$\textit{Definite}(P) \!=\! P \setminus Q$.
We have that~$P$ is satisfiable
if and only if $M(\textit{Definite}(P))\models Q$.
%

We will often use tuples of variables 
as arguments of predicates and 
write $p(X,Y)$, instead of 
$p(X_{1},\ldots,X_{m}, Y_{1},\ldots,Y_{n})$, 
whenever the 
values of $m~(\geq 0)$ and $n~(\geq 0)$ are not relevant.
Whenever the order of the variables is not relevant, 
we will feel free to identify 
tuples of distinct variables 
with finite sets,
and we will extend to finite tuples the usual operations and relations 
 on finite sets.
Given two tuples $X$ and~$Y$ of distinct elements, 
(i)~their union $X\cup Y$ is obtained 
by concatenating them and removing all duplicated occurrences of elements, 
(ii)~their intersection $X\cap Y$ is obtained 
by removing from~$X$ the elements which do not occur in~$Y$,  
(iii)~their difference $X \!\setminus \!Y$ is obtained
by removing from $X$ the elements which occur in $Y$, and
(iv)~$X\!\subseteq\! Y$ holds if every element of $X$ occurs in $Y$.
For all $m\!\geq\!0$, equality of $m$-tuples is defined as follows:  
$(u_{1},\ldots,\!u_{m}) = (v_{1},\!\ldots,\!v_{m})$
iff $\bigwedge_{i=1}^{m}\!u_{i}\!=\!v_{i}$. \hspace{-1pt}The empty tuple \hspace{-1pt}$()$ is identified with the empty set $\emptyset$.


By $A(X,Y)$, where $X$ and $Y$ are disjoint tuples of {distinct} 
variables, we denote an atom $A$ such that $\mathit{vars}(A) = X\cup Y$.
Let $P$ be a set of definite clauses. We say that the atom $A(X,Y)$
is {\em functional from} $X$ {\em to} $Y$ {\em with respect to} $P$ if

(F1) $M(P) \models \forall X, Y, Z.\ A(X,Y) \wedge A(X,Z) ~\rightarrow~ Y\!=\!Z$

\noindent
The reference to the set $P$ of definite clauses is omitted, 
when understood from the context. Given a functional atom $A(X,Y)$, we say that $X$ and $Y$ are
its {\em input} and {\em output} (tuples of\/)
variables, respectively. 
The atom $A(X,Y)$ 
is said to be {\em total from} $X$ {\em to}~$Y$ {\em with respect to} $P$ if


(F2)  $M(P) \models \forall X \exists Y.\ A(X,Y)$

\noindent
If $A(X,Y)$ is a total, functional atom from $X$ to $Y$,
we may write $A(X;Y)$, instead of $A(X,Y)$.
For instance, {\small{\tt append(Xs,Ys,Zs)}} is a total, functional atom from 
{\small{\tt (Xs,Ys)}}
to {\small{\tt Zs}} with respect to the set of clauses 1--7 of 
Section~\ref{sec:IntroExample}.

Now we extend the above notions from atoms to conjunctions of atoms.
Let~$F$ be a conjunction 
$A_1(X_1;Y_1), \ldots,A_n(X_n;Y_n)$ such that: 
{(i)~$X\! = \!(\bigcup^n_{i=1} X_i)\!\setminus\!(\bigcup^n_{i=1} Y_i)$}, 
(ii)~$Y \! = \! (\bigcup^n_{i=1} Y_i)$, and
(iii)~for $i\!=\!1,\ldots,n,$ $Y_i$ is disjoint from $(\bigcup^i_{j=1} X_j) \cup (\bigcup^{i-1}_{j=1} Y_j)$. 
Then, the conjunction~$F$ is said to be a {\em \mbox{total}, functional conjunction from}~$X$ 
{\em to}~$Y$  and it is also written as~$F(X;Y)$. 
For $F(X;Y)$,
the above properties~(F1) and~(F2) hold
if we replace $A$ by~$F$.
For instance, {\small{\tt append(Xs,Ys,Zs), rev(Zs,Rs)}} is a total, functional conjunction
from {\small{\tt (Xs,Ys)}} to {\small{\tt (Zs,Rs)}} with respect 
to the set of clauses~1--7 of Section~\ref{sec:IntroExample}.

	
	\section{Transformation Rules for Constrained Horn Clauses}
	\label{sec:TransfRules}
	In this section we present the rules that we
propose for transforming CHCs, and in particular,
for introducing difference predicates, and
we prove their soundness.
We refer to Section~\ref{sec:IntroExample}
for examples of how the rules are applied.

First, we introduce the following notion of a {\it stratification} for a set of clauses. 
Let $\mathbb N$  denote the set of the natural numbers.
A \emph{level mapping} is a function 
$\ell\!:\mathit{Pred}\!\rightarrow\!\mathbb{N}$. For every predicate $p$,
the natural number $\ell(p)$  is said to be the {\it level\/} of~$p$.
Level mappings are extended to atoms by stating that 
the level $\ell(A)$ of an atom $A$ is the 
level of its predicate symbol.
A clause \( H\leftarrow c, A_{1}, \ldots, A_{n}
\) is {\it stratified with respect to}~$\ell$ if, for \( i\!=\!1,\ldots ,n \),
\(\ell (H)\geq \ell(A_i)\).
A set $P$ of CHCs is {\it stratified~with respect to $\ell$} if all clauses
in $P$ are stratified with respect to~$\ell$.
Clearly, for every set~$P$ of CHCs, there exists a level mapping $\ell$ such that
$P$ is stratified with respect to $\ell$~\cite{Llo87}. 

A {\it transformation sequence from} $P_{0}$ {\it to} $P_{n}$
is a sequence 
$P_0 \Rightarrow P_1 \Rightarrow \ldots \Rightarrow P_n$ of sets of CHCs 
such that, for $i\!=\!0,\ldots,n\!-\!1,$ $P_{i+1}$ is derived from $P_i$, denoted
$P_{i} \Rightarrow P_{i+1}$, by
applying one of the following \mbox{Rules~R1--R7}. We assume that $P_0$ is stratified with respect to~a given level mapping~$\ell$.

\smallskip
\noindent
(R1)~{\it Definition Rule.}  
Let $D$ be the clause $\textit{newp}(X_1,\ldots,X_k)\leftarrow c,A_1,\ldots,A_n$, 
where:
(i)~\textit{newp} is {a predicate symbol in $\textit{Pred\/}$ not occurring in 
	the sequence $P_0\Rightarrow P_1\Rightarrow\ldots\Rightarrow P_i$,}
\mbox{(ii)~$c$ is a constraint,} 
(iii)~the predicate symbols of $A_1,\ldots,A_n$
occur in $P_0$, and 
(iv)~$(X_1,\ldots,X_k)\subseteq \mathit{vars}(c,A_1,\ldots,A_n)$.
Then, $P_{i+1}= P_i\cup \{D\}$. We set $\ell(\textit{newp})=\textit{max}\,\{\ell(A_i) \mid i=1,\ldots,n\}$.

\smallskip

For $i\!=\!0,\ldots, n$, by $\textit{Defs}_i$ 
we denote the set of clauses, 
called {\it definitions}, 
introduced by Rule~R1 during the construction of $P_0\Rightarrow P_1\Rightarrow\ldots\Rightarrow P_i$. 
Thus, $\emptyset\!=\!\textit{Defs}_0\!\subseteq\!\textit{Defs}_1\!\subseteq\!\ldots.$
However, by using Rules R2--R7 we can replace a definition in $P_i$,
for $i\!>\!0$, and hence it may happen that $\textit{Defs}_{i+1}\!\not\subseteq\!P_{i+1}$.

\smallskip
\noindent
(R2)~{\it Unfolding Rule.} 
Let  $C$: $H\leftarrow c,G_L,A,G_R$ be a clause in $P_i$, where $A$ is an atom.
{Without loss of generality, we assume that
$\mathit{vars}(C)\cap\mathit{vars}(P_0)=\emptyset$.}
Let {\it Cls}: $\{K_{1}\leftarrow c_{1},
B_{1},~\ldots,~K_{m}\leftarrow c_{m}, B_{m}\}$, $m\geq 0$,
be the set of clauses in $P_0$,
such that: for $j=1,\ldots,m$,
(1)~there exists a most general unifier $\vartheta_j$ of $A$ and
$K_j$, and {(2)~the conjunction of constraints $(c, c_{j})\vartheta_j$ is satisfiable.}
Let $\mathit{Unf}(C,A,P_0) = \{(H\leftarrow  c, {c}_j,G_L, B_j, G_R) 
\vartheta_j \mid  j=1, \ldots, m\}$.
Then, by {\it unfolding~$C$ with respect to $A$}, we derive the set 
$\mathit{Unf}(C,A,P_0)$ of clauses and we get 
$P_{i+1}= (P_i\setminus\{C\}) \cup \mathit{Unf}(C,A,P_0)$.

\smallskip
When we apply Rule R2, we say that, for $j=1, \ldots, m,$ the atoms in the conjunction 
$B_j \vartheta_j$ are {\it derived} from $A$, and the atoms 
in the conjunction $(G_L, G_R) \vartheta_j$ are {\it inherited} from the corresponding atoms in the body of $C$.

\smallskip
\noindent
(R3)~{\it Folding Rule.} 
Let $C$: $H\leftarrow c, G_L,Q,G_R$ be a clause in $P_i$, 
and let
$D$: $K \leftarrow d, B$ be a clause in $\textit{Defs}_i$.
Suppose that:
(i)~either $H$ is $\mathit{false}$ or \mbox{$\ell(H) \geq \ell(K)$,} and
(ii)~there exists a substitution~$\vartheta$ such that~\mbox{$Q\!=\! B\vartheta $} and
$\mathbb D\models \forall(c \rightarrow d\vartheta)$.
By \textit{folding \( C\)
	using definition~\( D\)}, we derive clause 
\(E  \):~\( H\leftarrow c, G_L, K\vartheta, G_{R} \), and we get
\( P_{i+1}= (P_{i}\setminus\{C\})\cup \{E \} \).

\vspace{1.5mm}
\noindent
(R4)~{\it Clause Deletion Rule.} 
Let  $C$: $H\leftarrow c,G$ be a clause in $P_i$ such that the constraint~$c$ is unsatisfiable. Then, we get
$P_{i+1} = P_i \setminus\{C\}$.

\vspace{1.5mm}
\noindent
(R5)~{\it Functionality Rule.} 
Let $C$: $H\leftarrow c, G_L,F(X;Y),F(X;Z), G_R$ be a clause in~$P_i$,
where $F(X;Y)$ is a total, functional conjunction in 
$\mathit{Definite}(P_0)\cup \mathit{Defs}_i$.
By \textit{functionality}, from~$C$ we derive clause~$D$: $H\leftarrow c, Y\!=\!Z, G_L,F(X;Y),G_R$, and
we get \( P_{i+1}= (P_{i}\setminus\{C\})\cup \{D \} \).

\vspace{1.5mm}

\noindent
(R6)~{\it Totality Rule.} 
Let $C$: $H\leftarrow c,  G_L,F(X;Y),G_R$ be a clause in $P_i$
such that $Y \cap \mathit{vars}(H\leftarrow c,  G_L,G_R) = \emptyset$ and
$F(X;Y)$ is a total, functional conjunction in $\mathit{Definite}(P_0)\cup \mathit{Defs}_i$.
By \textit{totality}, from~$C$ we derive clause~$D$\,: $H\leftarrow c, G_L,G_R$
and we get \( P_{i+1}= (P_{i}\setminus\{C\})\cup \{D \} \).

\vspace{1.5mm}

{Since the initial set of clauses is obtained
by translating a terminating functional program,
the functionality and  totality properties hold by construction
and we do not need to prove them when we apply Rules~R5 and~R6.}


\vspace{1.5mm}
\noindent
(R7)~{\it Differential Replacement Rule.} 
Let $C$: $H\leftarrow c,  G_L,F(X;Y),G_R$ be a clause in $P_i$, and
let $D$: $\mathit{diff}(Z) \leftarrow d, F(X;Y), R(V;W)$
be a definition clause in $\mathit{Defs}_i$, where: 
(i)~$F(X;Y)$ and $R(V;W)$ are total, functional conjunctions 
with respect to~$\mathit{Definite}(P_0)\cup \mathit{Defs}_i$,
(ii)~$W \cap \vars(C)  = \emptyset$,  
(iii)~$\mathbb{D}\models\forall (c\rightarrow d)$, and
(iv)~$\ell(H)\!>\!\ell(\mathit{diff})$.
By {\it differential replacement},
we derive  
$E$: $H\!\leftarrow \!c, G_{\!L},R(V;\!W), \mathit{diff}(Z), G_{\!R}$
and we get $P_{i+1}= (P_{i}\setminus\{C\}) \cup \{E \}$. 

Note that no assumption is made on the set $Z$ of
 variables, apart from the one deriving from the fact that $D$ is a 
 definition, that is, 
 $Z\! \subseteq\! {\mathit{vars}}(d) \cup X\cup Y \cup V \cup W.$ 

{Rule R7 has a very general formulation that eases the proof of the Soundness Theorem (see Theorem~\ref{thm:unsat-preserv}),
which extends to Rules R1--R7 correctness results for transformations of 
 (constraint) logic programs~\cite{EtG96,Fi&04a,Sek09,TaS86}.
In the transformation algorithm of Section~\ref{sec:Strategy}, 
we will use a specific instance of Rule R7 which is sufficient for ADT removal (see, in particular, the Diff-Introduce step).}


\vspace{-1,5mm}
\begin{theorem}[Soundness]
\label{thm:unsat-preserv}
Let $P_0 \Rightarrow P_1 \Rightarrow \ldots \Rightarrow P_n$ be 
a transformation sequence
using Rules {\rm{R1--R7}}.
Suppose that the following condition holds\,$:$
	
\vspace{.5mm}
\noindent\hangindent=8mm
\makebox[8mm][l]{\rm \,(U)}for $i \!=\! 1,\ldots,n\!-\!1$, if $P_i \Rightarrow P_{i+1}$ by 
folding a clause in $P_i$ using a definition $D\!: H \leftarrow c,B$  
in $\mathit{Defs}_i$, 
then, for some $j\! \in\!\{1,\ldots,i\!-\!1,i\!+\!1,\ldots, n\!-\!1\}$, 
$ P_{j}\Rightarrow P_{j+1}$ by unfolding
$D$ with respect to an atom $A$ such that $\ell(H)=\ell(A)$. 
	
\vspace{.5mm}
\noindent
If	$P_n$ is satisfiable, then $P_0$ is satisfiable.
\end{theorem}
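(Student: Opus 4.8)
The plan is to reason semantically in terms of least $\mathbb{D}$-models. Since a set $P$ of CHCs is satisfiable iff $M(\textit{Definite}(P)) \models Q$, where $Q$ is the set of goals of $P$, it suffices to control how the least model of the definite part, and the goals, evolve along the sequence $P_0 \Rightarrow \cdots \Rightarrow P_n$. The target is to show that, modulo the freshly introduced predicates, the least model is non-decreasing, i.e.\ $M(\textit{Definite}(P_0))$ is contained in $M(\textit{Definite}(P_n))$ on the predicates of $P_0$; since satisfiability requires the goal bodies to be \emph{unsatisfiable} in the model, a larger model makes satisfiability harder, so satisfaction of the goals of $P_n$ in the larger model will force satisfaction of the goals of $P_0$ in the smaller one. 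I would organise the work into the classical fold/unfold rules R1--R6, for which I adapt the existing correctness results for (constraint) logic program transformation, and the new rule R7, which needs a dedicated argument.

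For the classical rules the per-step behaviour is standard: clause deletion (R4) removes a clause irrelevant to every model; the definition rule (R1) is a conservative extension fixing the new predicate as its least solution; unfolding (R2) is sound, complete one-step resolution; and the functionality and totality rules (R5, R6) preserve the least model by direct use of (F1) and (F2) --- the added equality $Y\!=\!Z$ is forced, and the deleted totally-determined conjunction with fresh output is always satisfiable. The genuinely new rule R7 can only \emph{enlarge} the least model: since $F(X;Y)$ and $R(V;W)$ are total and functional and $\mathbb{D}\models\forall(c\to d)$, whenever the original clause $C$ fires the replacement $E$ also fires (the atoms $R(V;W)$ and $\mathit{diff}(Z)$ are then both derivable), so no derivation is lost and spurious ones may appear; the side condition $\ell(H)>\ell(\mathit{diff})$ prevents the head predicate from depending on $\mathit{diff}$, which keeps the enlargement monotone and is exactly what makes R7 sound in one direction only.

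The main obstacle is folding (R3). In isolation folding may \emph{shrink} the least model and be unsound for our direction --- a base case can be folded into an unfounded recursion, turning an unsatisfiable set into a satisfiable one --- so no purely local per-step preservation is available; this is precisely the gap closed by hypothesis (U). Adapting the Tamaki--Sato / Etalle--Gabbrielli technique, I would fix the stratification $\ell$, assign to each ground atom of $M(\textit{Definite}(P_0))$ a rank equal to the size of a minimal proof tree, and prove by well-founded induction on this rank that derivations transfer across the sequence; the core of the accounting is that each use of a folded clause is charged to the matching unfolding step provided by (U) at the equal level $\ell(H)=\ell(A)$, so that replaying the fold consumes exactly the resolution step that the paired unfold introduced and the rank strictly decreases, the constraints $\ell(H)\geq\ell(K)$ in R3 and $\ell(H)>\ell(\mathit{diff})$ in R7 keeping the measure well-founded throughout. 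The remaining delicate point is the bookkeeping needed to combine the monotone enlargement of R7 with the two-directional preservation of R1--R6, and to track the goals --- which folding and R7 may themselves rewrite --- so that a violated goal in $P_n$ always witnesses a violated goal in $P_0$.
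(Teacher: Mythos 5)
Your proposal is correct and follows essentially the same route as the paper: the key step in both is the inclusion $M(\textit{Definite}(P_0)\cup\mathit{Defs}_n)\subseteq M(\textit{Definite}(P_n))$, with R4--R7 justified as model-enlarging body replacements (your argument for R7 via totality of $R(V;W)$ and the definition of $\mathit{diff}$ is exactly the paper's Lemma on R7), and condition (U) used to discharge folding via the Tamaki--Sato-style well-founded induction. The only packaging difference is that the paper subsumes R4--R7 under a single ``body weakening'' rule and cites prior work for the rank-based induction you sketch, and it handles the transformed goals by replacing $\mathit{false}$ with a fresh predicate so that the inclusion covers them uniformly---the one bookkeeping point you flag but leave open.
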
\vspace*{-5mm}
\begin{proof}See Appendix. \end{proof}
\vspace{-2.5mm}

Thus, to prove the satisfiability of a 
set~$P_0$ of clauses,
it suffices: (i)~to construct a transformation sequence  
$P_0 \Rightarrow \ldots \Rightarrow P_n$, 
and then (ii)~to prove that $P_n$ is satisfiable.
Note, however, that if Rule~R7 is used,  it may happen that $P_0$ 
is satisfiable and $P_n$ is unsatisfiable, that is, some false counterexamples 
to satisfiability, so-called {\it false positives},
may be generated, as we now show.

\vspace{-1mm}
\begin{example}\label{ex:false-pos}
Let us consider the following set~$P_{1}$ of clauses derived by 
adding the definition clause {\small{\tt D}} to the initial 
set 
$P_{0}\!=\!\{${\small{\tt{C,1,2,3}}}\} of clauses:

\vspace*{-2mm}

{\small
\begin{verbatim}
C. false :- X=0, Y>0, a(X,Y).         
1. a(X,Y) :- X=<0, Y=0.       2. a(X,Y) :- X>0, Y=1.    3. r(X,W) :- W=1.
D. diff(Y,W) :- a(X,Y), r(X,W).
\end{verbatim}
}

\vspace{-2.5mm}

\noindent 
where: (i)~{\small{\tt{a(X,Y)}}} is a total, functional atom from {\small{\tt{X}}} to {\small{\tt{Y}}}, 
(ii)~{\small{\tt{r(X,W)}}}  is a total, functional atom from {\small{\tt{X}}} to {\small{\tt{W}}}, and 
(iii)~{\small{\tt{D}}} is a definition in~$\mathit{Defs}_{1}$. 
By applying Rule~R7, from $P_{1}$ we  derive the set $P_{2}\!=\!\{${\small{\tt{E,1,2,3,D}}}\}
of clauses where:

\vspace{-2mm}

{\small
\begin{verbatim}
E. false :- X=0, Y>0, r(X,W), diff(Y,W).
\end{verbatim}
}

\vspace{-2mm}

\noindent
Now we have that $P_{0}$ is satisfiable, while $P_{2}$ is unsatisfiable. 
\end{example}
\vspace*{-5mm}
	
	\section{An Algorithm for ADT Removal}
	\label{sec:Strategy}
	\vspace{-1mm}
Now we present  Algorithm~\Diff~for 
eliminating ADT terms from CHCs 
by using the transformation rules presented in Section~\ref{sec:TransfRules}
and automatically introducing suitable 
difference predicates.
If \Diff~terminates, it transforms a set {\it Cls} 
of clauses into a new set $\mathit{TransfCls}$
where the arguments of all predicates have basic type.
Theorem~\ref{thm:unsat-preserv} guarantees that if $\mathit{TransfCls}$ is satisfiable,
then also {\it Cls} is satisfiable.  

Algorithm~\Diff~(see Figure~\ref{fig:AlgoD}) 
removes ADT terms starting from the set~$\mathit{Gs}$ of 
goals in~$\mathit{Cls}$.
\Diff~collects these goals in~$\mathit{InCls}$ and stores in
$\mathit{Defs}$ the  definitions of new predicates
introduced by Rule R1.

\begin{figure}[!ht]
\vspace{-5mm}
\noindent \hrulefill\nopagebreak

\noindent {\bf Algorithm}~\Diff\\
{\em Input}: A set $\mathit{Cls}$ of clauses;
\\
{\em Output}: A set $\mathit{TransfCls}$ of clauses that have basic types.

\vspace*{-2mm}
\noindent \rule{2.0cm}{0.2mm}

\noindent 
Let $\mathit{Cls} = \mathit{Ds} \cup \mathit{Gs}$, where $\mathit{Ds}$ is a set of definite clauses and $\mathit{Gs}$ 
is a set of goals;

\noindent $\mathit{InCls}:=\mathit{Gs}$;
\noindent $\mathit{Defs}:=\emptyset$;
\noindent $\mathit{TransfCls}:=\emptyset;$

\noindent
{\bf while} $\mathit{InCls}\!\neq\!\emptyset$ {\bf do}

\hspace*{3mm}\vline\begin{minipage}{11cm} 
\makebox[4mm][l]{~$\scriptstyle\blacksquare$}$\mathit{Diff\mbox{-}Define\mbox{-}Fold}(\mathit{InCls},\mathit{Defs}, 
\mathit{NewDefs},\mathit{FldCls});$

\makebox[4mm][l]{~$\scriptstyle\blacksquare$}$\mathit{Unfold}(\mathit{NewDefs},\mathit{Ds},\mathit{UnfCls});$

\makebox[4mm][l]{~$\scriptstyle\blacksquare$}$\mathit{Replace}(\mathit{UnfCls}, \mathit{Ds}, \mathit{RCls});$

\makebox[5mm][l]{}$\mathit{InCls}:=\mathit{RCls};$~~
$\mathit{Defs}:=\mathit{Defs}\cup\mathit{NewDefs};$~~
$\mathit{TransfCls}:=\mathit{TransfCls}\cup\mathit{FldCls};$
\end{minipage} 

\vspace*{1mm} 
\noindent \hrulefill
\vspace*{-2mm} 
\caption{The ADT Removal Algorithm~\Diff \label{fig:AlgoD}.}
\vspace*{-6mm}
\end{figure}

Before describing the procedures used by Algorithm~\Diff,  
let us first introduce the following notions.

Given a conjunction $G$ of atoms, $\mathit{bvars}(G)$ 
(or $\mathit{adt\mbox{-}vars}(G)$)
denotes the set of variables in $G$ that have a basic type (or an ADT type, 
respectively).
We say that an atom (or clause) {\it has basic types} if {\em all\/} its 
arguments (or atoms, respectively)
have a basic type.
An atom (or clause) {\em has ADTs\/} if {\em at least one\/} of its arguments (or atoms, respectively) has an ADT type.

{Given a set (or a conjunction) $S$ of atoms, $\mathit{SharingBlocks}(S)$ 
denotes the partition of $S$ with respect to
the reflexive, transitive closure~$\Downarrow_S$ of the relation~$\downarrow_S$ defined as follows.
Given two atoms $A_1$ and $A_2$ in $S$, 
$A_1\! \downarrow_S\! A_2$ holds 
iff $\mathit{adt\mbox{-}vars}(A_1) \cap 
\mathit{adt\mbox{-}vars}(A_2)\!\neq\! \emptyset$.}
The elements of the partition are called the {\em sharing blocks} of~$S$.

A {\it generalization}  of a pair $(c_1,c_2)$ of constraints is a constraint $\alpha (c_1,c_2)$
such that $\mathbb D \models\forall (c_1 \rightarrow \alpha (c_1,c_2))$ and 
$\mathbb D \models\forall (c_2 \rightarrow \alpha (c_1,c_2))$~\cite{Fi&13a}.
In particular,
we consider the following generalization operator based on 
{\it widening}~\cite{CoH78}.
Suppose that $c_1$ is the conjunction $(a_1,\ldots,a_m)$ of atomic 
constraints, then $\alpha (c_1,c_2)$ is defined as the conjunction of 
all $a_i$'s in $(a_1,\ldots,a_m)$
such that \mbox{$\mathbb D\!\models\!\forall (c_2\! \rightarrow\! a_i)$}.
For any constraint $c$ and tuple $V$ of variables, the {\it projection} of
$c$ onto $V$ is a constraint $\pi(c,V)$ such that: 
(i)~$\mathit{vars}(\pi(c,V))\!\subseteq\! V$, and
(ii)~$\mathbb D \models \forall (c\!\rightarrow\!\pi(c,V))$. 
In our implementation,
$\pi(c,V)$ is computed from $\exists Y. c$, where $Y\!=\! \vars(c) \setminus V$,
by a quantifier elimination algorithm in the theory of booleans and {\it rational} 
numbers. This implementation is safe in our context, and avoids relying on
modular arithmetic, as is often done when eliminating quantifiers in LIA~\cite{Rab77}.

For two conjunctions $G_1$ and $G_2$ of atoms, $G_1\Embedded G_2$ holds
if $G_1\!=\!(A_1,\ldots,A_n)$ 
and there exists a subconjunction $(B_{1},\ldots, B_{n})$
of $G_2$ (modulo reordering) such that, for $i\!=\!1,\ldots,n,$ $B_i$ is an {instance} of $A_i$.
A conjunction $G$ of atoms is {\it connected} if it consists of a single sharing block.

\vspace{1mm}
\noindent
$\scriptstyle\blacksquare$ {\it Procedure $\mathit{Diff\mbox{-}Define\mbox{-}Fold}$} (see Figure~\ref{fig:Diff}).
At each iteration of the body of the {\bf for} loop, 
the $\mathit{Diff\mbox{-}Define\mbox{-}Fold}$ procedure removes the ADT terms occurring in
a sharing block $B$ of the body of a clause~$C\!:$ $H\!\leftarrow\! c,  B, G'$ of $\mathit{InCls}$. 
This is done by possibly introducing some new definitions (using Rule R1)
and applying the Folding Rule~R3. To allow folding, some applications
of the Differential Replacement Rule~R7  may be needed.
We have the following four cases.

\begin{figure}[!ht]
\noindent \hrulefill \nopagebreak

\noindent {\bf Procedure $\mathit{Diff\mbox{-}Define\mbox{-}Fold}(\mathit{InCls},\mathit{Defs}, 
	\mathit{NewDefs},\mathit{FldCls})$}
\\
{\em Input}\/: A set {\it InCls} of clauses and a set {\it Defs} of definitions;
\\
{\em Output}\/: A set {\it NewDefs} of definitions and a set $\mathit{FldCls}$ 
of clauses with basic types.

\vspace{-2mm}
\noindent \rule{2.0cm}{0.2mm}


\noindent $\mathit{NewDefs} := \emptyset; \ \mathit{FldCls}:= \emptyset$;

\noindent {\bf for} each clause $C$: $H\leftarrow c, G$ in $\mathit{InCls}$ {\bf do}

\noindent 
\hspace{3mm}{\bf if} $C$ has basic types {\bf then}
$\mathit{InCls}\! :=\! \mathit{InCls}\!\setminus\! \{C\}$;\ $\mathit{FldCls}:=\mathit{FldCls}\cup\{C\}$

\noindent 
\hspace*{3mm}{\bf else}
\vspace{1mm}

\hspace{6mm}
\vline\hspace{1mm}\begin{minipage}{11.4cm}
let $C$ be $H\leftarrow c, B, G'$ 
where $B$ is a sharing block in $G$ that contains at least one atom 
that has ADTs;
\\
$\bullet$ ({\bf Fold}) {\bf if} in  $\mathit{Defs}\cup \mathit{NewDefs}$ there is a (variant of) 
clause $D$: $\mathit{newp}(V) \If d, B$\\
\hspace*{3.7mm}such that $\mathbb D \models\forall (c \rightarrow d)$
{\bf then}
\underline{\Down fold} $C$ using $D$ and derive $E$: $H\!\leftarrow c, \mathit{newp}(V),\! G'$; 
\\
$\bullet$ ({\bf Generalize}) {\bf else if} in  $\mathit{Defs}\cup
\mathit{NewDefs}$ there is a (variant of a) clause \\
\hspace*{3.5mm}$\mathit{newp}(V) \If d, B$ and $\mathbb D \not\models\forall
 (c \rightarrow d)$
{\bf then}
\\ 
\hspace*{4.5mm}\vline\hspace{1.5mm}\begin{minipage}{10.8cm} 
\underline{\Down introduce definition} $\mathit{GenD}$: $\mathit{genp}(V) \If \alpha(d,c), B$; 
\\
\underline{\Down fold} $C$ using $\mathit{GenD}$ and derive $E$: $H\leftarrow c, \mathit{genp}(V), G'$; 
\\
$\mathit{NewDefs} := \mathit{NewDefs} \cup \{\mathit{GenD}\}$;
\end{minipage}\vspace{.5mm}
\\ 
$\bullet$ ({\bf Diff}-{\bf Introduce}) {\bf else if} in  $\mathit{Defs}\cup
\mathit{NewDefs}$ there is a (variant of a) clause \\
\hspace*{3.9mm} $D$:\,$\mathit{newp}(U) \If d, B'$ 
such that: (i) $\mathit{vars}(C)\! \cap\!\mathit{vars}(D)\!=\!\emptyset$, and 
(ii)~$B' \Embedded B$
{\bf then}
\\ 
\hspace*{4.5mm}\vline\hspace{1.5mm}\begin{minipage}{10.8cm} 

take a maximal subconjunction $M$ of $B$, if any, such that: 
\hangindent=3mm
\\
(i) $B\!=\!(M, F(X;Y))$, for some \Down connected conjunction $M$ and non-empty conjunction $F(X;Y)$,
(ii) $B'\vartheta=(M, R(V;W))$, for some substitution $\vartheta$ such that
$W \cap \vars(C) = \emptyset$, and 
(iii)~for every atom
$A$ in $\{F(X;Y),R(V;W)\}$, $\ell(H)>\ell(A)$; 

\underline{\Down introduce definition} \hangindent=15mm $\widehat{D}$: $\mathit{diff}(Z) \leftarrow \pi(c,X),F(X;Y),R(V;W)$
 
\hspace*{3mm}where $Z\!=\!\mathit{bvars}(F(X;Y),R(V;W))$;

$\mathit{NewDefs} := \mathit{NewDefs} \cup \{\widehat{D}\}$;

\underline{\Down replace} $F(X;Y)$ by $(R(V;W), \mathit{diff}(Z))$ in $C$, and derive  clause\\
\hspace*{3mm}\rule{0mm}{2.9mm}$C'$: $H\leftarrow c, M, R(V;W), \mathit{diff}(Z), G'$;

{\bf if} $\mathbb D \models\forall (c \rightarrow d\vartheta)$
               
\hspace*{3mm}{\bf then}~\underline{\Down fold} $C'$ using $D$ and derive $E$: $H\leftarrow c,\mathit{newp}(U\vartheta), \mathit{diff}(Z), G'$;

\hspace*{3mm}{\bf else} 
\hspace{.5mm}\vline\hspace{1.5mm}\begin{minipage}[t]{9.5cm}%
\underline{\Down introduce\,definition}\,$\mathit{GenD}$:\,$\mathit{genp}(U') \!\If\!\alpha(d\vartheta,c), B'\vartheta$ \\
\hspace*{3mm}where $U'\!=\!\mathit{bvars}(B'\vartheta)$;

\underline{\Down fold} $C'$ using $\mathit{GenD}$ and derive $E$: $H\leftarrow c, \mathit{genp}(U'), \mathit{diff}(Z), G'$; 

$\mathit{NewDefs} := \mathit{NewDefs} \cup \{\mathit{GenD}\}$;
\end{minipage} 
\end{minipage}

\noindent
$\bullet$ ({\bf Project}) {\bf else}

\hspace*{4.5mm}\vline\hspace{1.5mm}\begin{minipage}{10.8cm}
\underline{\Down introduce definition} $\mathit{ProjC}$: $\mathit{newp}(V) \If \pi(c,V), B$ 
\ where $V\!=\!\mathit{bvars}(B)$;

\underline{\Down fold} $C$ using $\mathit{ProjC}$ and derive  
clause $E$: $H\leftarrow c, \mathit{newp}(V), G'$; 

$\mathit{NewDefs} := \mathit{NewDefs} \cup \{\mathit{ProjC}\}$;
\end{minipage}
\end{minipage}	
\vspace{-1mm}

\noindent 
\hspace{3mm}$\mathit{InCls}\! :=\! (\mathit{InCls}\setminus \{C\}) \cup \{E\}$;
%
	
\nopagebreak \vspace*{-2mm} \noindent \hrulefill
\vspace*{-1mm}
\caption{The {\it Diff-Define-Fold} Procedure.
\label{fig:Diff}}
\vspace*{-4mm}
\end{figure}	

\vspace{1.5mm}
\noindent
$\bullet$ ({\bf Fold}). 
We remove the ADT arguments occurring in $B$ by folding $C$ 
using a definition $D$ introduced at a previous step.
Indeed, the head of each definition introduced by Algorithm~\Diff~is by construction a 
tuple of variables of basic type.

\vspace{1.5mm}
\noindent
$\bullet$ ({\bf Generalize}). We introduce a new definition
$\mathit{GenD}\!:$ $\mathit{genp}(V) \If \alpha(d,c), B$
whose constraint is obtained by generalizing $(d,c)$, where $d$ is the 
constraint occurring in an already available definition whose body is $B$.
Then, we remove the ADT arguments occurring in $B$ by folding $C$ using 
 $\mathit{GenD}$. 

\vspace{1.5mm}
\noindent
$\bullet$ ({\bf Diff}-{\bf Introduce}).
Suppose that $B$ {\em partially matches} the body of an available definition
$D$: $\mathit{newp}(U) \If d, B'$, that is, 
for some substitution $\vartheta$,
$B=(M, F(X;Y))$, and  $B'\vartheta=(M, R(V;W))$.
Then, we introduce a difference predicate through the new definition
$\widehat{D}$: $\mathit{diff}(Z) \leftarrow \pi(c,X),F(X;Y),R(V;W),$
where $Z\!=\!\mathit{bvars}(F(X;Y),R(V;W))$ and, 
by Rule~R7, we replace the conjunction
$F(X;Y)$ by $(R(V;W), \mathit{diff}(Z))$ in the body of $C$,
thereby deriving $C'$. 
Finally, we remove the ADT arguments in $B$ by folding $C'$ using either $D$ or 
a clause $\mathit{GenD}$ whose constraint is a generalization of the pair~$(d\vartheta,c)$
of constraints. 

\smallskip
The example of Section~\ref{sec:IntroExample} allows us to illustrate 
this (Diff-Introduce) case. With reference to that example,
clause~$C$: $H\!\leftarrow\! c, G$ that we want to fold is clause~{\tt 11},
whose body has the single sharing block 
$B$: `{\small{\tt append(Xs,Ys,Zs)\!,}} {\small{\tt rev(Zs,Rs)\!,}}\,{\small{\tt len(Xs,N0)\!,}}\,{\small{\tt len(Ys,N1)\!,}}\,{\small{\tt append(Rs,[X],R1s)\!,}}\,{\small{\tt len(R1s,N21)}}'. 
Block $B$ partially\,matches\,the\,body\,`{\small{\tt append(Xs,\!Ys,Zs)\!,\,rev(Zs,Rs)\!,\,len(Xs,N0)\!,}}\,{\small{\tt len(Ys,N1)\!,}} {\small{\tt len(Rs,N2)}}'\,of clause\,{\tt 8}\,of Section\,\ref{sec:IntroExample} which plays\,the role\,of
definition\,{$D$:\,$\mathit{newp}(U)\! \If\! d, B'$} in this example.
Indeed, we have that: 

\smallskip
\noindent
$M$= ({\small{\tt append(Xs,Ys,Zs)\!,} {\tt rev(Zs,Rs)\!, len(Xs,N0)\!, len(Ys,N1)}}),

\noindent
$F(X;Y)$\! =\! ({\small{\tt  append(Rs,[X],R1s)\!,\;len(R1s,N21)}}), where 
$X$={\small{\tt  (Rs,X)}}, $Y$={\small{\tt  (R1s,N21)}},

\noindent
$R(V;W)$\! =\! {\small{\tt len(Rs,N2)}}, where $V$=\,{\small{\tt (Rs)}},  $Y$=\,{\small{\tt (N2)}}.

\smallskip
\noindent
In this example, $\vartheta$ is the identity substitution. Morevover,
the condition on the level mapping~$\ell$ required in the 
{\it Diff-Define-Fold} Procedure of Figure~\ref{fig:Diff} 
can be fulfilled by stipulating that 
$\ell(${\small{\tt new1}}$)\!>\!\ell(${\small{\tt append}}$)$ and 
$\ell(${\small{\tt new1}}$)\!>\!\ell(${\small{\tt len}}$)$.
Thus, the definition $\widehat{D}$ 
to be introduced~is:

\vspace{-2mm}

{\small
	\begin{verbatim}
	12. diff(N2,X,N21) :- append(Rs,[X],R1s), len(R1s,N21), len(Rs,N2).
	\end{verbatim}
}
\vspace{-2mm}

\noindent
Indeed, we have that: (i)~the projection $\pi(c,X)$ is 
$\pi(${\small{\tt N01=N0+1}},\,{\small{\tt (Rs,X)}}$)$, that is, the empty conjunction,
(ii)~$F(X;Y),\ R(V;W)$ is the body of clause~{\tt 12},
and (iii)~the head variables {\small{\tt N2}}, {\small{\tt X}}, 
and {\small{\tt N21}} are the integer
variables in that body. 
Then, by applying Rule~R7, we replace in clause~{\tt 11} the 
conjunction `{\small{\tt  append(Rs,[X],R1s)\!, len(R1s,N21)}}'  by the new conjunction
`{\small{\tt len(Rs,N2)\!,}} {\small{\tt diff(N2,X,N21)}}', hence deriving clause $C'$, which is
clause~{\tt 13} of Section~\ref{sec:IntroExample}.
Finally, by folding clause~{\tt 13} using clause~{\tt 8}, we derive clause~{\tt 14} of Section~\ref{sec:IntroExample},
which has no list arguments.

\vspace{1.5mm}
\noindent
$\bullet$ ({\bf Project}). If none of the previous three cases apply, then
we introduce a new definition $\mathit{ProjC}$: $\mathit{newp}(V) \If \pi(c,V), B,$ 
\ where $V=\mathit{bvars}(B)$. Then,
we remove the ADT arguments occurring in $B$ by {folding} $C$ using $\mathit{ProjC}$.

\smallskip
The $\mathit{Diff\mbox{-}Define\mbox{-}Fold}$ procedure may
introduce new definitions with ADTs in their bodies, which are 
added to {\it NewDefs} 
and processed by the $\mathit{Unfold}$ procedure.
In order to present this procedure, we need the following notions.

The application of Rule R2 is controlled by marking some atoms in the body of a clause as {\it unfoldable}.
If we unfold with respect to~atom~$A$ clause $C$: \mbox{$H \!\leftarrow c, L, A, R$}
the marking of the clauses
in $\mathit{Unf(C,A,Ds)}$ is handled as follows: 
the atoms derived from $A$ are not marked 
as unfoldable and each atom~$A''$ inherited from an atom~$A'$ 
in the body of $C$ is marked as unfoldable
iff $A'$ is marked as unfoldable.  

An atom $A(X;Y)$ in a conjunction $F(V;Z)$ of atoms is said to be a {\it source atom} if
$X\!\subseteq\! V$. Thus, a source atom corresponds to an innermost 
function call in a 
given functional expression.
For instance, in clause~{\tt 1} of 
Section~\ref{sec:IntroExample}, the source atoms are
{\small{\tt append(Xs,Ys,Zs)}}, {\small{\tt len(Xs,N0)}}, and 
{\small{\tt len(Ys,N1)}}. Indeed, the 
body of clause~{\tt 1} corresponds to
{\small{\tt len(rev(append xs ys))}} {\tt =}\hspace*{-1.8mm}{\tt /} {\small{\tt (len xs)+(len ys)}}.

An atom $A(X;Y)$ in the body of clause $C$: \mbox{$H \leftarrow c, {L}, A(X;Y), {R}$}
is a {\it head-instance} with respect to~a set~{\it Ds} of clauses if, 
for every clause $K\leftarrow d, B$ in~{\it Ds} such that: 
(1)~there exists a most general unifier $\vartheta$ of $A(X;Y)$ and
$K$, and (2)~the constraint $(c, d)\vartheta$ is
satisfiable, we have that $X\vartheta\! =\! X$. Thus, the input variables of
$A(X;Y)$ are not instantiated by unification.
For instance, the atom {\small{\tt append([X|Xs],Ys,Zs)}} is a head-instance, while
{\small{\tt append(Xs,Ys,Zs)}} is not.

In a set {\it Cls}  of clauses, predicate $p$ {\it immediately depends on} predicate $q$,
if in\,{\it Cls} there is a clause of the form $p(\ldots) \leftarrow \ldots, q(\ldots), \ldots$
The {\it depends on} relation is the transitive closure of the {\it immediately depends on}
relation.
Let $\prec$ be a well-founded ordering on tuples of terms such that, for
all terms $t,u,$ if $t\!\prec\! u$, then, for all substitutions $\vartheta,$ 
$t\vartheta\!\prec\! u\vartheta$. A predicate $p$ is {\it descending} with respect to~$\prec$
if, for all clauses, $p(t;u) \leftarrow c,\, p_1(t_1;u_1),\ldots,p_n(t_n;u_n),$
for $i\!=\!1,\ldots,n,$ if $p_i$ depends on $p$ then $t_i\!\prec\! t$.
An atom is descending if its predicate is descending.
The well-founded ordering $\prec$ we use in our implementation is based on the {\it subterm}
relation and is defined as follows: $(x_1,\ldots,x_k)\!\prec\! (y_1,\ldots,y_m)$
if every $x_i$ is a subterm of some $y_j$ and there exists $x_i$ which is a strict subterm
of some $y_j$. For instance, the predicates {\small{\tt append}}, 
{\small{\tt rev}}, and {\small{\tt len}} in
the example of Section~\ref{sec:IntroExample} are all descending.

\vspace{1mm}
\noindent
$\scriptstyle\blacksquare$ {\it Procedure Unfold}  (see Figure~\ref{fig:unfoldProc})  repeatedly 
applies Rule~R2 in two phases.
In Phase~1, the procedure unfolds the clauses in $\mathit{NewDefs}$ with respect to~at least one source atom. 
Then, in Phase~2, clauses are unfolded with respect to~head-instance atoms. 
Unfolding is repeated only w.r.t~descending atoms.
The termination of the {\it Unfold} procedure is ensured by the fact that
the unfolding with respect to~a non-descending atom is done at most once in each phase. 	

\begin{figure}[!ht]
\noindent \hrulefill
	
	\noindent {\bf Procedure $\mathit{Unfold}(\mathit{NewDefs},\mathit{Ds},\mathit{UnfCls})$}
	\\
	{\em Input}\/: A set $\mathit{NewDefs}$ of definitions and a set $\mathit{Ds}$ of definite clauses;
	\\
	{\em Output}\/: A set $\mathit{UnfCls}$ of clauses.
	
	\vspace*{-2.5mm} \noindent \rule{2.0cm}{0.2mm}
	
	
	\noindent
	$\mathit{UnfCls} := \mathit{NewDefs}$; ~Mark as unfoldable a nonempty set of source atoms in the body of each clause of $\mathit{UnfCls}$;
	
\noindent \hangindent=2mm
	- {\bf while} there exists a clause $C$: $H \leftarrow c, {L}, A, {R}$\, in $\mathit{UnfCls}$, for some conjunctions $L$ and~$R$, such that $A$ is an
	unfoldable atom {\bf do}
	
\vspace*{0.5mm}
	
\noindent
\hspace{5mm}$\mathit{UnfCls}:=(\mathit{UnfCls}-\{C\})\cup \mathit{Unf(C,A,Ds)}$;
	
\vspace*{0.5mm}
	
\noindent {- Mark} as unfoldable all atoms in the body of each clause in $\mathit{UnfCls}$;
	
\noindent \hangindent=2mm
- {\bf while} there exists a clause $C$: 	\mbox{$H \leftarrow c, {L}, A, {R}$} in $\mathit{UnfCls}$,  for some conjunctions $L$ and~$R$,
	such that $A$ is a head-instance atom with respect to~{\it Ds} and $A$ is either unfoldable or descending {\bf do}
	
\vspace*{0.5mm}
	
	\noindent
	\hspace{5mm}$\mathit{UnfCls}:=(\mathit{UnfCls}-\{C\})\cup \mathit{Unf(C,A,Ds)}$;
	
\vspace{-2mm}
\noindent \hrulefill
\vspace{-2mm}
\caption{The {\it Unfold} Procedure.\label{fig:unfoldProc}}
\vspace*{-3mm}	
\end{figure}

\vspace*{1mm}
\noindent
$\scriptstyle\blacksquare$ {\it Procedure Replace} simplifies some clauses
by  applying Rules~R5 and~R6 as long as possible.
{\it Replace} terminates because each application of either rule decreases 
the number of atoms. 

Thus, each execution of the {\it Diff-Define-Fold}, 
{\it Unfold}, and {\it Replace} procedures terminates.
However, Algorithm~\Diff~might not terminate because 
new predicates may be introduced
by {\it Diff-Define-Fold} at each iteration of 
the {\bf while}-{\bf do} of~\Diff.
Soundness of~\Diff~follows from soundness of the transformation 
rules~{(see Appendix)}.

\vspace*{-1mm}

\begin{theorem}[Soundness of {Algorithm}~\Diff] 
\label{thm:soundness-AlgorithmR}
Suppose that {Algorithm}~\Diff~terminates 
for an input set $\mathit{Cls}$ of clauses, and let $\mathit{TransfCls}$
be the output set of clauses.
Then, every clause in $\mathit{TransfCls}$ has basic types, and
if $\mathit{TransfCls}$ is satisfiable, then $\mathit{Cls}$ is satisfiable.
\end{theorem}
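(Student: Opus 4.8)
The plan is to prove the two assertions separately. The first one---that every clause of $\mathit{TransfCls}$ has basic types---is immediate from the code of the procedures: a clause is added to $\mathit{FldCls}$ (and hence, in the main loop, to $\mathit{TransfCls}$ via $\mathit{TransfCls}:=\mathit{TransfCls}\cup\mathit{FldCls}$) only through the branch of $\mathit{Diff\mbox{-}Define\mbox{-}Fold}$ guarded by ``{\bf if} $C$ has basic types''. Every other branch either folds $C$ into a clause $E$ that is put back into $\mathit{InCls}$, or introduces a new definition; none of them ever places a clause with ADTs into $\mathit{FldCls}$. Hence, by construction, $\mathit{TransfCls}$ contains only clauses with basic types.

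For the second assertion I would reduce the claim to the Soundness Theorem (Theorem~\ref{thm:unsat-preserv}). First I would observe that a terminating run of Algorithm~\Diff~determines a transformation sequence $P_0 \Rightarrow \ldots \Rightarrow P_n$ with $P_0 = \mathit{Cls}$: the definitions introduced by the (Project), (Generalize) and (Diff-Introduce) cases are applications of Rule~R1; the $\mathit{Unfold}$ procedure applies Rule~R2; the foldings performed in all four cases of $\mathit{Diff\mbox{-}Define\mbox{-}Fold}$ are applications of Rule~R3; the $\mathit{Replace}$ procedure applies Rules~R5 and~R6; and the replacement step of the (Diff-Introduce) case is an application of Rule~R7. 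I would then check, case by case, that the side conditions of each rule are met; the conditions on the level mapping $\ell$ required by R3 and R7 are exactly those enforced by the guards of $\mathit{Diff\mbox{-}Define\mbox{-}Fold}$, while the functionality and totality hypotheses of R5, R6 and R7 hold because the initial clauses translate a terminating functional program.

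Next I would relate $P_n$ to $\mathit{TransfCls}$. Since the run terminates with $\mathit{InCls}=\emptyset$, every introduced definition has been unfolded (and thereby removed from the current set of clauses), and every clause with ADTs has either been reduced to a clause of $\mathit{TransfCls}$ or consumed; the only clauses left untouched are the original definite clauses $\mathit{Ds}$, which mention ADT predicates only. Hence $P_n = \mathit{Ds} \cup \mathit{TransfCls}$, where the predicates defined in $\mathit{TransfCls}$ are fresh and, inside $\mathit{TransfCls}$, do not depend on any ADT predicate. Consequently, from a $\mathbb D$-model of $\mathit{TransfCls}$ I would build a $\mathbb D$-model of $P_n$ by interpreting the ADT predicates according to $M(\mathit{Ds})$: the clauses of $\mathit{Ds}$ are satisfied because $M(\mathit{Ds})$ is a model of $\mathit{Ds}$, and the clauses of $\mathit{TransfCls}$ are satisfied because they involve only the fresh predicates and the constraints. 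Thus satisfiability of $\mathit{TransfCls}$ implies satisfiability of $P_n$, and Theorem~\ref{thm:unsat-preserv} then yields satisfiability of $P_0=\mathit{Cls}$.

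The main obstacle is discharging hypothesis~(U) of Theorem~\ref{thm:unsat-preserv} for the sequence produced by the algorithm: whenever a clause is folded using a definition $D\colon H \leftarrow c, B$, the same definition must, at some step, be unfolded with respect to a body atom $A$ with $\ell(H)=\ell(A)$. Since Rule~R1 sets $\ell(H)$ to the maximum of the levels of the atoms of $B$, this amounts to showing that the $\mathit{Unfold}$ procedure unfolds each new definition with respect to a \emph{maximal-level} body atom. I would argue this from the two-phase design of $\mathit{Unfold}$, whereby Phase~1 unfolds with respect to source atoms and Phase~2 with respect to head-instance and descending atoms, so that the atom carrying the recursive, highest-level predicate is effectively unfolded. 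Reconciling this with the fact that a maximal-level atom of a definition need not be among its source atoms (as for {\small{\tt rev}} in the body of clause~{\tt 8} of Section~\ref{sec:IntroExample}) is the delicate point, and it is precisely here that the interaction between the level assignment of R1 and the unfolding strategy must be verified in order to guarantee that~(U) holds.
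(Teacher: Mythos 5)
Your overall route is the paper's: reduce the claim to Theorem~\ref{thm:unsat-preserv} by checking that every step of the algorithm is an instance of Rules R1--R7, and the extra care you take with the first assertion and with relating $P_n$ to $\mathit{Ds}\cup\mathit{TransfCls}$ is reasonable (indeed more explicit than what the paper writes). The problem is that you stop exactly where the proof has to do its only real work. You correctly identify the verification of Condition~(U) as ``the delicate point,'' but you do not discharge it, and the strategy you sketch for it---show that \textit{Unfold} always unfolds each new definition with respect to a maximal-level body atom, for a level mapping fixed in advance---would fail for the very reason you notice: a maximal-level atom of a definition need not be among the source atoms that \textit{Unfold} marks as unfoldable, so nothing in the two-phase design guarantees that the definition is unfolded with respect to an atom whose level equals that of the head. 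As it stands, Theorem~\ref{thm:unsat-preserv} cannot yet be invoked and the proof is incomplete.

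The missing idea is to reverse the dependency between the level mapping and the unfolding strategy. The level mapping $\ell$ is not fixed beforehand; it is constructed dynamically during the run of~\Diff, so that for each definition introduced by Rule~R1 at least one of the source atoms actually marked as unfoldable (and hence unfolded in Phase~1 of \textit{Unfold}) is assigned the maximal level among the body atoms---which, by the level assignment stipulated in R1, is precisely the level of the head. This is how the paper closes the argument: Condition~(U) holds provided each application of \textit{Unfold} marks as unfoldable a nonempty set of source atoms containing one atom with the same level as the head of the clause, and this property ``can always be enforced by dynamically constructing the level mapping $\ell$ during the execution'' (the remaining constraints on $\ell$ coming from R3, R7 and the stratification of $P_0$ leave enough freedom for this). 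Without this observation, or some substitute for it, your appeal to Theorem~\ref{thm:unsat-preserv} is not justified.
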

\vspace*{-1mm}

{Algorithm}~\Diff~is not complete, in the sense that, even if $\mathit{Cls}$ is a 
satisfiable set of input clauses,
then \Diff~may not terminate or, due to the use of Rule R7, it may terminate with
an output set $\mathit{TransfCls}$ of unsatisfiable clauses, thereby generating a 
false positive (see Example~\ref{ex:false-pos} in
Section~\ref{sec:TransfRules}).
However, due to well-known undecidability results for the satisfiability problem 
of CHCs,
this limitation cannot be overcome, unless we restrict the class of clauses we 
consider.
The study of such restricted classes of clauses is beyond the scope of the present 
paper and,
instead, in the next section, we evaluate the effectiveness of {Algorithm}~\Diff~from
an experimental viewpoint. 
	
	\section{Experimental Evaluation}
	\label{sec:Experiments}
In this section we present the results of some experiments we have 
performed for assessing the effectiveness of our transformation-based 
CHC solving technique.
We compare our technique with the one proposed by Reynolds and Kuncak~\cite{ReK15},
which extends the  SMT solver CHC4 with inductive reasoning.

\smallskip
\noindent
{\it Implementation.} We have developed the 
{\sc AdtRem} tool  for ADT removal,
which is based on an implementation of Algorithm~\Diff~in
the VeriMAP system~\cite{De&14b}.


\smallskip
\noindent{\it Benchmark suite and experiments.}
Our benchmark suite consists of 169 verification problems
over inductively defined data structures, such as lists, queues, heaps, and trees,
which have been adapted from the benchmark suite considered by 
Reynolds and Kuncak~\cite{ReK15}. 
These problems come from benchmarks used by various theorem provers:
(i)~53~problems come from CLAM~\cite{IrB96},
(ii)~11~from HipSpec~\cite{Cl&13},
(iii)~63~from IsaPlanner~\cite{DiF03,Jo&10}, and
(iv)~42~from Leon~\cite{Su&11}. 
We have performed the following experiments,
whose results are summarized 
in Table~\ref{tab:evaluation}\hspace{.2mm}\footnote{The tool and the benchmarks 
are available at {\small{\url{https://fmlab.unich.it/adtrem/}}}.
}\!.

\noindent
(1) We have considered Reynolds and Kuncak's {\bf dtt} encoding of the verification problems,
where natural numbers are represented using the built-in SMT type {\it Int},
and we have discarded: (i)~problems that do not use ADTs, and 
(ii)~problems that cannot be directly represented in Horn clause format.
Since {\sc AdtRem} does not support higher order functions, 
nor user-provided lemmas,
in order to make a comparison between 
the two approaches on a level playing field,
we have replaced higher order functions by suitable first order instances and
we have removed all auxiliary lemmas from the input verification problems.
We have also replaced the
basic functions recursively defined over natural numbers,
such as the {\it plus} and {\it less-or-equal} functions, 
by LIA constraints.

\noindent
(2) Then, we have translated each verification problem into a set, call it $P$, of CHCs 
in the Prolog-like syntax supported by {\sc AdtRem}
by using a modified version of the SMT-LIB parser of the ProB system~\cite{LeB03}.
We have run Eldarica and Z3\,\footnote{More specifically, 
Eldarica v2.0.1 and Z3 v4.8.0\,with\,the\,Spacer\,engine~\cite{Ko&14}.}\hspace{-1.4mm}, %
which use no induction-based mechanism for handling ADTs,
to check the satisfiability of $P$.
Rows `$\mathrm{Eldarica}$' and `$\mathrm{Z3}$' show the number of solved problems,
that is, problems whose CHC encoding has been proved satisfiable.

\noindent
(3) We have run 
algorithm~\Diff~on $P$ to produce
a set~$T$ of CHCs without ADTs. 
Row `\Diff'\,reports the number of\,problems for which
Algorithm\,\Diff~terminates.

\noindent
(4) We have converted $T$ into the SMT-LIB format, and then
{we have} run Eldarica and Z3
for checking its satisfiability.
Rows `$\mathrm{Eldarica_{\,noADT}}$' and `$\mathrm{Z3_{\,noADT}}$'
report {outside round parentheses} the number of solved problems. 
There was only one false positive (that is, a satisfiable set 
{$P$ of clauses
transformed into an unsatisfiable set $T$}),
which we have {classified as} an {unsolved} problem.

\noindent
(5) In order to assess the improvements due 
{to the use of} the differential replacement rule
we have applied to $P$ a modified version, call it {$\mathcal{R}^{\circ}$}, of the ADT removal algorithm~\Diff~that
\textit{does not} introduce difference predicates,
that is, the  {\it Diff-Introduce} case of the {\it Diff-Define-Fold} Procedure of Figure~\ref{fig:Diff} 
is never executed.
The number of problems {for which $\mathcal{R}^{\circ}$} terminates 
and  the number of solved problems using Eldarica and Z3
are shown {within round parentheses}
in rows `\Diff', `$\mathrm{Eldarica_{\,noADT}}$', and `$\mathrm{Z3_{\/noADT}}$', respectively.

\noindent
(6) Finally, we have run  the {\bf cvc4+ig} 
configuration of the CVC4 solver 
extended with inductive reasoning~\cite{ReK15}
on the 169 problems in SMT-LIB format obtained at Step (1).
Row `CVC4+Ind' reports the number of solved problems.

\begin{table}[htbp]
\vspace{-1mm}
	\begin{center}
		\begin{tabular}{|@{\hspace{1mm}}l@{\hspace{2mm}}||r@{\hspace{2mm}}|r@{\hspace{2mm}}|r@{\hspace{2mm}}|r@{\hspace{2mm}}||r@{\hspace{2mm}}|}
			\hline
			~~& \multicolumn{1}{c|}{~~CLAM~~} & \multicolumn{1}{c|}{~~HipSpec~~} & \multicolumn{1}{c|}{~~IsaPlanner~~} & \multicolumn{1}{c||}{~~Leon~~} & \multicolumn{1}{c|}{~~Total~~} \\ \hline \hline
			{\it {number of problems}} & 53 & 11 & 63 & 42 & 169 \\ 			
			$\mathrm{Eldarica}$ & 0 & 2 & 4 & 9 & 15 \\  
			$\mathrm{Z3}$ & 6 & 0 & 2 & 10 & 18 \\  
			\hline
			&&&&& \\[-3.4mm]
			\Diff  & (18) 36 & (2) 4 & (56) 59 & (18) 30 & (94) 129 \\ 
			$\mathrm{Eldarica_{\,noADT}}$  & (18)  32 & (2)   4 & (56) 57 & (18)  29 & (94)  122 \\ 
			$\mathrm{Z3_{\,noADT}}$ & (18)  29 & (2) 3 & (55)  56 & (18)  26 & (93) 114 \\ 
			\hline
			&&&&& \\[-3.4mm]
			CVC4+Ind  & 17 & 5 & 37 & 15 & 74 \\   
			\hline			
		\end{tabular}
		
		\vspace{3mm}
		\caption{{\small {\it Experimental results.}
				For each problem we have set a timeout limit of 300 seconds.
				Experiments have been performed on an Intel Xeon CPU E5-2640 2.00GHz with 64GB RAM under CentOS.\Down
	}}\label{tab:evaluation}
\vspace{-2mm}		
	\end{center}
\end{table}
\vspace*{-1mm}



\noindent
{\it Evaluation of Results.} 
The results of our experiments
show that ADT removal considerably increases 
the effectiveness of CHC solvers
without inductive reasoning support.
For instance, Eldarica is able to solve {15} problems 
out of {169}, 
while it solves {122} problems after the removal of ADTs.
When using Z3, the improvement is slightly lower, but still very significant because the number of problems solved by Z3 rises from 18 to 114.
Note also that, when the ADT removal terminates (129 problems out of 169), 
the solvers are very effective
(about 95\% successful verifications for Eldarica 
and about 88\% for Z3).
The improvements specifically due to the use of the difference replacement rule
are demonstrated by the increase of the number of problems 
{for which} the ADT removal {algorithm} terminates 
(from 94 to 129), and 
of the number of problems solved  (from 94 to 122, for Eldarica, and
from 93 to 114, for Z3). 

{\sc AdtRem} compares  favorably to CVC4 extended with induction
(compare rows `$\mathrm{Eldarica_{\/noADT}}$' and  `$\mathrm{Z3_{\/noADT}}$' 
to row `CVC4+Ind').
Interestingly, 
the effectiveness of CVC4
may be increased if one extends the problem formalization with 
extra 
lemmas which may be used for proving the main conjecture.
Indeed, CVC4 solves 100 problems when auxiliary lemmas are added,
and 134 problems when, in addition, it runs on the 
{\bf dti} encoding, 
where  natural numbers are represented using both the  built-in type  {\it Int} 
and the ADT definition with the zero and successor constructors.
Our results show that in most cases {\sc AdtRem} needs neither 
those extra axioms nor
that sophisticated encoding.


Finally, in Table~\ref{tab:examples} we report some problems 
solved by {\sc AdtRem} with Eldarica
that are not solved by CVC4 with induction (using any encoding and auxiliary lemmas), 
or vice versa. For details, see 
{\small{\url{https://fmlab.unich.it/adtrem/}}}.


\begin{table}[htbp]
\begin{center}
\begin{tabular}{|@{\hspace{2mm}}l@{\hspace{2mm}}|@{\hspace{2mm}}l@{\hspace{2mm}}|}



%
\hline

{\it Problem} & {\it Property proved by {\sc AdtRem} and not by {\rm CVC4}~~~~~}  \\ \hline\hline
CLAM goal6\hspace*{1.7cm} &  $\forall x,y.\, \mathit{len (rev (append(x,y))) = len(x) + len(y)}$ \\ \hline
CLAM goal49 & $\forall x.\, \mathit{mem(x, sort (y)) \Rightarrow  mem(x, y)}$ \\ \hline
IsaPlanner goal52 &  $\forall n,l.\, \mathit{count(n,l) =  count(n, rev(l))}$  \\ \hline
IsaPlanner goal80 &   $\forall l.\, \mathit{sorted (sort(l))}$   \\ \hline
Leon heap-goal13 &   $\forall x,l.\, \mathit{len (qheapsorta (x,l)) = hsize(x) + len(l)}$\\ 

%
%
%
%
%
%
%
\hline 
\end{tabular}\vspace*{2mm}
\begin{tabular}{|@{\hspace{2mm}}l@{\hspace{2mm}}|@{\hspace{2mm}}l@{\hspace{2mm}}|}
 \hline
{\it Problem} & {\it Property proved by {\rm CVC4} and not by {\sc AdtRem}}  \\ \hline\hline
CLAM goal18 &  $\forall x,y.\, \mathit{rev(append(rev(x), y)) = append(rev(y),x)}$  \\ \hline
HipSpec rev-equiv-goal4 &  $\forall x,y.\, \mathit{qreva(qreva(x,y), nil) = qreva(y,x)}$     \\ \hline
HipSpec rev-equiv-goal6 &  $\forall x,y,z.\, \mathit{append(qreva(x,y), z) = qreva(x,append(y,z))}$    \\ \hline
\end{tabular}

\vspace{4mm}		
\caption{{\small A comparison between {\sc AdtRem} with Eldarica 
and CVC4 with induction. 
}}
\label{tab:examples}
\end{center}
\vspace{-6mm}
\end{table}
\vspace*{-2mm}

	\section{Related Work and Conclusions}
	\label{sec:RelConcl}
Inductive reasoning is supported, with different degrees of human
intervention, by many theorem provers, such as
ACL2~\cite{ACL2}, 
CLAM~\cite{IrB96},
Isabelle~\cite{IsaHOLBook02},
HipSpec~\cite{Cl&13},
Zeno~\cite{So&12}, and
PVS~\cite{Ow&92}.
The combination of inductive reasoning and SMT solving techniques
has been exploited  by many tools for program 
verification~\cite{Lei12,PhW16,ReK15,Su&11,Un&17,Ya&19}.

Leino~\cite{Lei12} integrates inductive reasoning into the Dafny program verifier
by implementing a simple strategy that rewrites user-defined 
properties that may benefit from induction
into proof obligation to be discharged by Z3.
The advantage of this technique is that it fully decouples inductive 
reasoning from SMT solving.
Hence, no extensions to the SMT solver 
are required.

In order to extend CVC4 with induction, Reynolds and Kuncak~\cite{ReK15}
also consider the rewriting of formulas that may take advantage from inductive 
reasoning, but this is done dynamically, during the proof search.
This approach allows CVC4 to perform the rewritings lazily,
whenever new formulas are generated during the proof search,
and to use the partially solved 
conjecture, to generate lemmas that may help in the proof of the initial conjecture.

The issue of generating suitable lemmas during inductive proofs has been also
addressed by Yang et al.~\cite{Ya&19} and implemented in~{\sc AdtInd}.
In order to conjecture new lemmas, their algorithm makes use of a 
syntax-guided synthesis strategy driven by a grammar, which is 
dynamically generated from user-provided templates and the function 
and predicate symbols encountered during the proof search.
The derived lemma~conjectures are then checked by the  SMT solver Z3.

In order to take full advantage of the efficiency of SMT solvers
in checking satisfiability of quantifier-free formulas over LIA,
ADTs, and finite sets, the Leon verification system~\cite{Su&11} implements an
SMT-based solving algorithm to check the satisfiability of formulas 
involving recursively defined first-order functions.
The algorithm interleaves the unrolling of recursive functions
and the SMT solving of the formulas generated by the unrolling.
Leon can be used to prove properties of Scala programs with ADTs and 
integrates with the Scala compiler and the SMT solver Z3.
A refined version of that algorithm, restricted to
{\it catamorphisms}, 
has been implemented into a solver-agnostic tool, called RADA~\cite{PhW16}.

In the context of CHCs, Unno et al.~\cite{Un&17} have
proposed a proof system that combines inductive theorem proving with
SMT solving. This approach  
uses \mbox{Z3-PDR~\cite{HoB12}}
 to discharge proof obligations generated by the proof system,
and has been applied 
to prove relational properties 
of OCaml  programs.

The distinctive feature of the 
technique presented in this paper is that it does not make
use of any explicit inductive reasoning, but it follows a transformational 
approach.
First, the problem of verifying the validity of a universally
quantified formula over ADTs is reduced to the problem of checking the 
satisfiability of a set of CHCs. 
Then, this set of CHCs is transformed with the aim of
deriving a set of CHCs  {over basic types (i.e., integers) 
only}, whose satisfiability
implies the satisfiability of the original set.
In this way, the reasoning on ADTs is separated from
the reasoning on satisfiability, which can be performed
by specialized engines for CHCs on basic types
(e.g. Eldarica~\cite{HoR18} and Z3-Spacer~\cite{Ko&13}).
Some of the ideas presented here have been explored 
in~\cite{De&19b,De&19c},  
but there
neither formal results nor an automated strategy were presented.

A key success factor of our technique is the introduction of
difference predicates, which can be viewed
as the transformational counterpart of lemma generation.
Indeed, as shown in Section~\ref{sec:Experiments}, the use of difference predicates
greatly increases the power of CHC solving with respect to previous techniques based on 
the transformational approach, which do not use difference predicates~\cite{De&18a}.

As future work, we plan to apply our transformation-based verification technique
to more complex program properties, such as relational properties~\cite{De&16c}. 


%
%
		
	\bibliographystyle{abbrv}

	\newpage
	\section{Appendix}
    \newcommand{\Iff}{\leftrightarrow}
\newcommand{\ONLYIF}{\Rightarrow}
\newcommand{\IFF}{\Leftrightarrow}

In this appendix we show the proofs of the results presented in Sections~\ref{sec:TransfRules}
and~\ref{sec:Strategy}. 
First, we recall some definitions and facts from the literature~\cite{PeP08}.
{The least \mbox{$\mathbb D$-model} of a set~$P$ of clauses is the set, denoted 
$M(P)$, of all ground atoms which are true in $P$~\cite{JaM94}.}

A {\em reverse-implication-based transformation sequence}
is a sequence $P_0 \Rightarrow P_1 \Rightarrow \ldots \Rightarrow P_n$ of 
sets of clauses where, for $i\!=\!0,\ldots,n\!-\!1,$ $ P_{i+1}$ is derived 
from~$P_i$ by
applying one of the following rules (see Section~\ref{sec:TransfRules}): 
Definition (Rule~R1), 
Unfolding (Rule~R2), 
Folding (Rule~R3), and the following rule, called
{\it Body Weakening} (Rule~W).

\medskip

\noindent 
\textrm{(Rule~W)} \textit{Body Weakening.}~~Let  $C$:
${H}\leftarrow c, c_{1}, {G}_{L},
{G}_{1}, {G}_{R}$ be a clause in ${P}_{i}$, and suppose that the
following condition holds for some constraint~$c_{2}$ and conjunction~$G_{2}$ 
of atoms:

\smallskip

$M(\textit{Definite}(P_0)\cup \mathit{Defs}_i) \models \forall\,( c_{1}\! \wedge \! {G}_{1}
\rightarrow \exists Y.\, c_{2}\! \wedge \! {G}_{2})$ 

\smallskip
\noindent
where $Y= \mathit{vars}(\{c_{2},{G}_{2}\}) \setminus \mathit{vars}(\{{H}, c,
{G}_{L},{G}_{R}\})$. 
Suppose also that $\ell(H)>\ell(\mathit{A})$, for every atom $A$ occurring 
in $G_2$ and not in $G_1$.
By \emph{body weakening}, from
clause~$C$ we derive  clause~$D$:
${H}\leftarrow c, c_{2}, {G}_{L},
{G}_{2}, {G}_{R}$, and we get 
$P_{i+1} = ({P}_{i}\setminus\{C\})\cup \{D\}$.

\begin{theorem}
\label{thm:cons} If $P_0 \repl \ldots \repl P_n$
is a reverse-implication-based transformation sequence for which
Condition {\rm (U)} of Theorem~$\ref{thm:unsat-preserv}$ holds.
For \mbox{$i=0,\ldots,n$,} let $\mathcal{D}_i=\textit{Definite}(P_i)$.
Then $M(\mathcal{D}_0\cup \mathit{Defs}_n)\subseteq M(\mathcal{D}_n)$.
\end{theorem}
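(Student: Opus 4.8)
The plan is to prove the inclusion by a global well-founded induction over the ground atoms of $M(\mathcal{D}_0\cup\mathit{Defs}_n)$, in the style of the annotation technique of~\cite{PeP08,TaS86}. First I would note that $\mathcal{D}_0\cup\mathit{Defs}_n$ is a set of definite CHCs, so $M(\mathcal{D}_0\cup\mathit{Defs}_n)$ is the least fixpoint of the immediate-consequence operator over $\mathbb{D}$ and coincides with the set of ground atoms admitting a finite ground proof tree whose nodes are ground instances of clauses of $\mathcal{D}_0\cup\mathit{Defs}_n$. The statement thus becomes: every atom provable in $\mathcal{D}_0\cup\mathit{Defs}_n$ is provable in $\mathcal{D}_n$, i.e.\ the transformation does not shrink the least model. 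A warning worth recording at the outset is that this cannot be established by a plain induction on the length of the sequence: folding genuinely may shrink the model at an intermediate step, and Condition~(U) of Theorem~\ref{thm:unsat-preserv} only promises a compensating unfolding at \emph{some} step $j$, possibly with $j$ larger than the folding step. Hence a single measure attached to the whole sequence is needed.

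Next I would classify Rules R1, R2 and W so as to isolate folding as the only obstacle. Unfolding (R2) preserves the least model exactly, as in the classical unfold/fold theory~\cite{EtG96,TaS86}, so it neither adds nor removes atoms. The Definition rule (R1) only adds clauses, all of which lie in $\mathit{Defs}_n$, and therefore cannot delete atoms contributed by $\mathcal{D}_0\cup\mathit{Defs}_n$. Body Weakening (W) replaces a body fragment $c_1\wedge G_1$ by a logically weaker one, since by hypothesis $M(\mathit{Definite}(P_0)\cup\mathit{Defs}_i)\models\forall(c_1\wedge G_1\rightarrow\exists Y.\,c_2\wedge G_2)$; the resulting clause fires at least as often, so W can only enlarge the model, which is the safe direction for a $\subseteq$-claim. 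Folding (R3) is the only rule that can remove atoms, because replacing a conjunction $B$ in a body by a defined atom $K\vartheta$ may close a spurious recursive loop that makes $K\vartheta$, and hence the head, unprovable.

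To control folding I would attach to each atom $A\in M(\mathcal{D}_0\cup\mathit{Defs}_n)$ a well-founded measure $\mu(A)$ built from a minimal proof tree of $A$, combining its size with the levels $\ell$ of the predicates labelling its nodes (for instance, the multiset of those levels, ordered by the multiset extension of $<$ on $\mathbb{N}$, refined by tree size). The induction then replays a minimal proof of $A$ inside $\mathcal{D}_n$: for the clause fired at the root I follow how it is rewritten along $P_0\Rightarrow\cdots\Rightarrow P_n$ and apply the induction hypothesis to the body atoms, each carrying a strictly smaller measure. The delicate case is folding, where the clause of $\mathcal{D}_n$ presents a defined atom $K\vartheta$ of level $\ell(K)$ in place of $B$; I must then exhibit a proof of $K\vartheta$ of measure still strictly below $\mu(A)$. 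This is exactly where Condition~(U) enters: it guarantees that the definition $D$ used in the fold was unfolded, somewhere in the sequence, with respect to an atom $A'$ satisfying $\ell(H)=\ell(A')$, and this matching level-$\ell(H)$ unfolding supplies the strict decrease at level $\ell(H)$ that makes $\mu$ drop even though the proof tree may momentarily grow when $B$ is expanded.

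The main obstacle I expect is precisely this folding step of the induction, namely proving the strict-decrease inequality for $\mu$ when a subproof is folded back into a defined atom; verifying that $\mu$ is well-founded and that the level bookkeeping induced by Condition~(U) indeed dominates the tree-growth is the technical heart of the argument. The three remaining cases go in the safe direction and are routine, so I would concentrate the formal work on the definition of $\mu$ and on this single inequality.
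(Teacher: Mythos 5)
Your plan is, in spirit, the proof the paper relies on: the paper does not prove Theorem~\ref{thm:cons} itself but imports it from the well-founded-annotation framework of Pettorossi and Proietti~\cite{PeP08}, recording only the key fact $M(\mathcal{D}_0\cup\mathit{Defs}_i)\models \mathcal{D}_i\leftarrow\mathcal{D}_{i+1}$. Your global well-founded induction on a level-plus-size measure over minimal proof trees, with Condition~(U) supplying the decrease that compensates folding, is exactly that strategy, and your observation that a step-by-step induction on the sequence cannot work is correct.

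There are, however, two concrete soft spots. First, your claim that Body Weakening ``can only enlarge the model'' because ``the resulting clause fires at least as often'' is unsound as a per-step statement: the hypothesis $M(\mathit{Definite}(P_0)\cup\mathit{Defs}_i)\models\forall(c_1\wedge G_1\rightarrow\exists Y.\,c_2\wedge G_2)$ holds only in that one fixed model, not in all $\mathbb{D}$-interpretations, so nothing guarantees that $G_2$ is derivable in $\mathcal{D}_{i+1}$ whenever $G_1$ was (the extensions of the predicates in $G_2$ may have shrunk through as-yet-uncompensated foldings elsewhere). The correct reading, which the paper states explicitly, is only that $M(\mathcal{D}_0\cup\mathit{Defs}_i)\models C\leftarrow D$; the safety of W must be established inside the same global induction as folding, not separately. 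Second, and relatedly, the W case is not ``routine'': in the key instance R7, the atom $\mathit{diff}(Z)$ that replaces $F(X;Y)$ has a minimal proof that \emph{contains} a proof of $F(X;Y)$ (its defining clause is $\mathit{diff}(Z)\leftarrow d, F(X;Y), R(V;W)$), so the tree-size component of your measure strictly increases; only the side condition $\ell(H)>\ell(A)$ for the atoms newly introduced by W (equivalently $\ell(H)>\ell(\mathit{diff})$ in R7) makes the level component drop, and you never invoke that condition. Finally, the folding inequality you defer is not a detail but the entire content of the theorem; since the paper also delegates it to~\cite{PeP08}, this is forgivable, but as a self-contained proof your text establishes only the easy cases.
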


The proof of this theorem~\cite{PeP08} is based on the fact
that, for
$i=0,\ldots,n\!-\!1$, $M(\mathcal{D}_0\cup \mathit{Defs}_i) \models \mathcal{D}_i 
\leftarrow \mathcal{D}_{i+1}$,
hence the term {\em reverse-implication-based} transformation sequence. 
Note that, in particular, if we apply the body weakening rule to a clause $C$ of a 
given set $P_{i}$ of clauses whereby
we replace a conjunction $G_{1}$ in the body of $C$ by a 
new a conjunction $G_{2}$ such that $M(\mathcal{D}_0\cup \mathit{Defs}_i) 
\models \forall\,(G_{1} \rightarrow G_{2})$, then we derive a new 
clause $D$ 
such that
$M(\mathcal{D}_0\cup \mathit{Defs}_i) \models C \leftarrow D$.


\begin{theorem}
\label{thm:unsat} Suppose that $P_0 \repl \ldots \repl P_n$
is a reverse-implication-based transformation sequence and 
Condition~{\rm (U)} of
Theorem~$\ref{thm:unsat-preserv}$ holds.
If $P_n$ is satisfiable, then $P_0$ is satisfiable.
\end{theorem}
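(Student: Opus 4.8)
The plan is to obtain Theorem~\ref{thm:unsat} from the model-inclusion result of Theorem~\ref{thm:cons} by means of the least-model characterization of satisfiability. Recall that a set $P$ of clauses is satisfiable iff $M(\textit{Definite}(P)) \models Q$, where $Q$ is the set of goals of $P$. I would first recast this characterization by promoting $\mathit{false}$ to a fresh nullary predicate $f$: let $\widehat{P_i}$ be $P_i$ with every goal $\mathit{false} \leftarrow c, B$ read as the definite clause $f \leftarrow c, B$, and extend the level mapping by setting $\ell(f)$ strictly above every other level. Then all clauses of $\widehat{P_i}$ are definite, so $\textit{Definite}(\widehat{P_i}) = \widehat{P_i}$, and a routine argument (extend any model of $P_i$ by interpreting $f$ as false, and invoke the minimality of the least model) shows that $P_i$ is satisfiable iff $f \notin M(\widehat{P_i})$.

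The key observation is that $\widehat{P_0} \Rightarrow \ldots \Rightarrow \widehat{P_n}$ is again a reverse-implication-based transformation sequence for which Condition (U) holds. Each step is the very same application of Rule R1, R2, R3, or W, now read over the extended signature; $\widehat{P_0}$ remains stratified since each $f \leftarrow c,B$ has $\ell(f)$ above every body level. The only side conditions mentioning $\mathit{false}$ are those of R3 and W, and because $\ell(f)$ is maximal, the alternative ``$H$ is $\mathit{false}$ or $\ell(H)\!\geq\!\ell(K)$'' of R3 and the condition $\ell(H)\!>\!\ell(A)$ of W are both met whenever the head is $f$ (note that $f$ never occurs in a clause body, so body weakening never introduces it). Condition (U) refers only to the head of the \emph{folding definition}, which by R1 is always a new predicate and hence never $f$, so it transfers verbatim.

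I would then apply Theorem~\ref{thm:cons} to this sequence to obtain $M(\widehat{P_0} \cup \textit{Defs}_n) \subseteq M(\widehat{P_n})$. Assuming $P_n$ satisfiable, we have $f \notin M(\widehat{P_n})$, and the inclusion gives $f \notin M(\widehat{P_0} \cup \textit{Defs}_n)$. Finally, since every predicate defined in $\textit{Defs}_n$ is new (by R1 it does not occur in $P_0$) and the bodies of these definitions mention only predicates of $P_0$, no predicate of $\widehat{P_0}$, and in particular not $f$, depends on any predicate of $\textit{Defs}_n$; hence the two least models agree on $f$, so $f \notin M(\widehat{P_0})$ and $P_0$ is satisfiable.

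The main obstacle is the middle step: verifying that the reformulation genuinely produces a reverse-implication-based transformation sequence, so that Theorem~\ref{thm:cons} is applicable. This reduces to checking that treating $\mathit{false}$ as a maximal-level predicate is compatible with every level-mapping side condition of R3 and W, and to the conservative-extension fact that the newly defined predicates of $\textit{Defs}_n$ never feed back into the predicates of $P_0$. Both are routine but essential, as they are exactly what permits passing between the goal-based notion of satisfiability and the purely least-model-theoretic statement supplied by Theorem~\ref{thm:cons}.
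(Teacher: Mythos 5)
Your proposal is correct and follows essentially the same route as the paper's proof: replace $\mathit{false}$ by a fresh predicate $f$ to make every $P_i$ definite, apply Theorem~\ref{thm:cons} to get $f \notin M(\widehat{P_0} \cup \textit{Defs}_n)$, and translate back to satisfiability of $P_0$. The only cosmetic difference is that you discharge the $\textit{Defs}_n$ issue at the end via least-model conservativity, whereas the paper discharges it up front by showing any $\mathbb{D}$-model of $P_0$ extends to one of $P_0 \cup \textit{Defs}_n$; these are the same routine fact.
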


\begin{proof}
First, we observe that $P_0$ is satisfiable iff 
$P_0 \cup \textit{Defs}_n$ is satisfiable. Indeed, we have that: 
(i)~if~$\mathcal{M}$ is a $\mathbb D$-model of $P_0$, then the $\mathbb D$-interpretation
$\mathcal{M} \cup \{\textit{newp}(a_1,\ldots,a_k) \mid \textit{newp}$ 
is a head predicate
in $\textit{Defs}_n \mbox{ and } a_1,\ldots, a_k$ are ground terms$\}$
is a $\mathbb D$-model of $P_0 \cup \textit{Defs}_n$,
and (ii)~if $\mathcal{M}$ is a $\mathbb D$-model of $P_0 \cup \textit{Defs}_n$, then
all clauses of $P_0$ are true in $\mathcal{M}$, and hence
$\mathcal{M}$  is a $\mathbb D$-model of $P_0$.

Then, let us consider a new sequence  $P'_0\Rightarrow \ldots \Rightarrow P'_n$
obtained from the transformation sequence $P_0\Rightarrow \ldots \Rightarrow P_n$
by replacing each occurrence of \textit{false} in the head of
a clause by a fresh, new predicate symbol, say $f$.
$P'_0,\ldots,P'_n$ are sets of definite clauses, and thus, for $i=0,\ldots,n,$ $\textit{Definite}(P'_i)= P'_i$.
The sequence \mbox{$P'_0\Rightarrow \ldots \Rightarrow P'_n$} satisfies the 
hypotheses of Theorem~\ref{thm:cons}, and hence 
$M(P'_0)\cup \mathit{Defs}_n\!\subseteq M(P'_n)$.
We have that: 

\noindent
\makebox[12mm][l]{}$P_n$ is satisfiable

\noindent
\makebox[12mm][l]{implies}{$P'_n\cup \{\neg f\}$ is satisfiable}

\noindent
\makebox[12mm][l]{implies}{$f\not\in M(P'_n)$}

\noindent
\makebox[12mm][l]{implies,}\,by Theorem~\ref{thm:cons}, 
$f\not\in M(P'_0)\cup \textit{Defs}_n$

\noindent
\makebox[12mm][l]{implies}{$P'_0 \cup \textit{Defs}_n\cup \{\neg f\}$ is satisfiable}

\noindent
\makebox[12mm][l]{implies}{$P_0 \cup \textit{Defs}_n$ is satisfiable}

\noindent
\makebox[12mm][l]{implies}{$P_0$ is satisfiable.} \hfill$\Box$
\end{proof}

Now, (i)~in order to prove Theorem~\ref{thm:unsat-preserv} of 
Section~\ref{sec:TransfRules}, which states the soundness of Rules R1--R7, 
we show that Rules R4--R7 are all instances
of Rule~W.

\medskip

An application of Rule~R4~(Clause Deletion) whereby 
clause~$C$: $H\leftarrow c,G$ is deleted
whenever the constraint~$c$ is unsatisfiable, is equivalent to
the replacement of the body of clause $C$ by {\it false}. 
Indeed, if $c$ is unsatisfiable, we have that:~\nopagebreak

\smallskip
$M(\textit{Definite}(P_0)\cup \mathit{Defs}_i) \models \forall\,(c \wedge G \rightarrow {\it false})$

\smallskip
\noindent
and the applicability condition of Rule~W
trivially holds. Also the condition:

\smallskip
$\ell(H)>\ell(\mathit{A})$, for every atom $A$ in ${\it false}$ 

\smallskip
\noindent
trivially holds, because there are no atoms in ${\it false}$.
Thus, the replacement of the body of clause $H\leftarrow c,G$ by {\it false}
can be performed by applying Rule~W.

\medskip

Let us now consider Rule~R5 (Functionality). 
Recall that by $F(X;Y)$ we denote a conjunction of atoms 
that defines a total, functional relation from~$X$ to~$Y$.
When Rule~R5 is applied whereby a conjunction $F(X;Y),F(X;Z)$
is replaced by $Y\!=\!Z, F(X;Y)$, it is the case that 

\smallskip
$M(\mathit{Definite}(P_0)\cup \mathit{Defs}_i) \models \forall(F(X;Y) \wedge F(X;Z) \rightarrow Y\!=\!Z)$

\smallskip
\noindent
holds. When this replacement is performed, also the applicability 
condition of Rule~W on the levels of atoms trivially holds, and thus
Rule~R5 is an instance of Rule~W. 

\medskip

An application of Rule~R6 (Totality) replaces a conjunction $F(X;Y)$
by {\it true} (that is, the empty conjunction), which is implied by any formula. Hence,  Rule~R6 is an instance of Rule~W.

\medskip

Rule~R7 (Differential Replacement), is an instance of Rule~W as a 
consequence of the following lemma.

\begin{lemma}\label{lemma:R7unsat} Let us consider a transformation sequence 
$P_{0}\Rightarrow \ldots \Rightarrow P_{i}$ and a 
clause $C${\rm :}~$H\leftarrow c,  G_L,F(X;Y),G_R$ in $P_{i}$. 
Let us assume that by applying Rule~{\rm R7} on clause~$C$ using 
the definition clause

\smallskip
$D{\rm :}~\mathit{diff}(Z) \leftarrow d, F(X;Y), R(V;W),$

\smallskip
\noindent
where\,{\rm :} $(i)$~$W \cap \vars(C) = \emptyset,$  and
$(ii)$~$\mathbb{D}\models\forall (c\rightarrow d),$ 
we derive clause

\smallskip
$E${\rm :} $H\leftarrow c,  G_L,R(V;W), \mathit{diff}(Z), G_R$

\smallskip
\noindent
and we get the new set 
$P_{i+1} = (P_{i}\setminus\{C\}) \cup \{E \}$ of clauses. Then,

\smallskip
$M(\textit{Definite}(P_0) \cup \mathit{Defs}_i)\models \forall( c \wedge F(X;Y) \rightarrow \exists W.\,(R(V;W) \wedge \mathit{diff}(Z)))$.
\end{lemma}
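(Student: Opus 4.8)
The plan is to prove the implication by reasoning semantically in the least model $\mathcal{M} = M(\textit{Definite}(P_0)\cup\mathit{Defs}_i)$, reducing the universally quantified statement to an arbitrary ground instance and exhibiting an explicit witness for the existentially quantified tuple $W$. First I would fix a ground valuation $\sigma$ of every variable occurring free in the formula, that is of every variable other than those in $W$, and assume that both $c\sigma$ and $F(X;Y)\sigma$ hold in $\mathcal{M}$; the goal is then to produce a ground value $w$ for $W$ making $R(V;w)$ and the corresponding instance of $\mathit{diff}(Z)$ true. From hypothesis~(ii), $\mathbb{D}\models\forall(c\rightarrow d)$, and since $c\sigma$ holds we obtain that $d\sigma$ holds as well.

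Next I would invoke totality. Because $W\cap\vars(C)=\emptyset$ and $V\cap W=\emptyset$, the input tuple $V$ lies entirely among the variables already grounded by $\sigma$, so $V\sigma$ is ground. Since $R(V;W)$ is, by the applicability condition of Rule~R7, a total functional conjunction with respect to $\textit{Definite}(P_0)\cup\mathit{Defs}_i$, property~(F2) supplies a ground tuple $w$ with $R(V\sigma;w)$ true in $\mathcal{M}$; this $w$ is the witness for $\exists W$. It then remains to establish $\mathit{diff}(Z)$ under the extended valuation $\sigma'=\sigma[W\mapsto w]$. Under $\sigma'$ the three conjuncts $d$, $F(X;Y)$, and $R(V;W)$ of the body of $D$ all hold: the first by the previous step, the second because $\sigma'$ differs from $\sigma$ only on $W$ and $W\cap\vars(F(X;Y))=\emptyset$ (so $F(X;Y)\sigma'=F(X;Y)\sigma$), and the third by the choice of $w$. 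As $D\in\mathit{Defs}_i\subseteq\textit{Definite}(P_0)\cup\mathit{Defs}_i$ and $\mathcal{M}$ is a model of every clause in that set, the head $\mathit{diff}(Z)$ must hold under $\sigma'$, which completes the construction of the witness and hence the ground instance.

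The main point to get right is not a deep estimate but careful bookkeeping of variable occurrences. One must check that $V$ is indeed ground before applying totality, which is exactly where the side condition $W\cap\vars(C)=\emptyset$, together with $V\cap W=\emptyset$, is used, and that passing from $\sigma$ to $\sigma'$ disturbs neither the truth of $F(X;Y)$ nor the value substituted into $Z$, so that the body instance of $D$ we verify is precisely the one whose head is the atom $\mathit{diff}(Z)$ appearing in the conclusion. I note that functionality of $F$ or of $R$ is not needed for this direction; only totality of $R$ and the fact that $D$ is a true definite clause in the least model are used. This is consistent with the observation following Theorem~\ref{thm:unsat-preserv} that Rule~R7 may introduce false positives, since the reverse inclusion, which functionality would bear on, does not hold in general.
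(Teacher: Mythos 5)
Your proof is correct and follows essentially the same route as the paper's: totality of $R(V;W)$ supplies the witness for $\exists W$, the truth of the definition $D$ in the least model yields $\mathit{diff}(Z)$, and hypothesis~(ii) bridges $c$ to $d$; you merely carry out at the level of ground valuations what the paper states as two formula-level entailments. The added remark that only totality (not functionality) of $R$ is used is accurate and consistent with the paper's discussion of false positives.
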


\begin{proof} Let $\mathcal M$ denote $M(\textit{Definite}(P_0) 
\cup \mathit{Defs}_i)$.
Since $R(V;W)$ is a total, functional conjunction from $V$ to $W$
with respect to~$\textit{Definite}(P_0) \cup \mathit{Defs}_i$,
we have:

\smallskip

$\mathcal M\models \forall\,( c \wedge F(X;Y) \rightarrow \exists W.\, R(V;W))$ \hfill (1)~~~~

\smallskip
\noindent
Since, by Condition~$(i)$, none of the variables in $W$ occurs in~$C$, from definition~$D$, we get:

\smallskip

$\mathcal M \models \forall\,(d \wedge F(X;Y) \wedge R(V;W) \rightarrow \mathit{diff}(Z))$\hfill (2)~~~~

   \smallskip

\noindent
From~(1) and~(2), by Condition~$(ii)$, we get the thesis.\hfill $\Box$
\end{proof}

\medskip

From Lemma~\ref{lemma:R7unsat}
it follows that Rule~R7, which replaces in the body of 
clause~$C{\rm :}~H\leftarrow c,  G_L,F(X;Y),G_R$
the conjunction~$F(X;Y)$ by the new conjunction~$R(V;W), \mathit{diff}(Z)$, 
whenever Condition $\ell(H)\!>\!\ell(\mathit{diff})$ holds, is an instance of Rule~W.

Having proved Lemma~\ref{lemma:R7unsat}), we have that Rules R4--R7 
are all instances of Rule~W. Thus, we get the following lemma.

\begin{lemma}
\label{lemma:R1-R7-rev-impl} 
Suppose that $P_0 \repl \ldots \repl P_n$ 
is a transformation sequence using Rules \mbox{{\rm{R1}}--{\rm{R7}}}, and suppose also that
Condition {\rm (U)} of Theorem~$\ref{thm:unsat-preserv}$ holds.
Then $P_0 \repl \ldots \repl P_n$
is a reverse-implication-based transformation sequence.
\end{lemma}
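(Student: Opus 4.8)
The plan is to prove the lemma by a rule-by-rule inspection of the sequence $P_0 \Rightarrow \ldots \Rightarrow P_n$, showing that each step $P_i \Rightarrow P_{i+1}$ is already a legal step of a reverse-implication-based sequence, i.e. an application of R1, R2, R3, or W. Three of the seven rules used, namely R1 (Definition), R2 (Unfolding), and R3 (Folding), are literally among the rules admitted by a reverse-implication-based sequence, so every step using one of them requires no argument. It therefore remains to treat the steps that may use R4, R5, R6, or R7, and to recast each of them as an application of Rule W. This is exactly the content of the four paragraphs preceding the lemma, which I would simply assemble into a single case analysis.

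Concretely, for R4 I would take the weakened fragment to be \textit{false}: since the deleted clause has an unsatisfiable constraint $c$, the Body-Weakening implication $M(\textit{Definite}(P_0)\cup\mathit{Defs}_i)\models\forall(c\wedge G\rightarrow\mathit{false})$ holds vacuously, and the level side-condition is trivially met because \textit{false} contains no atoms. For R5 the new fragment $Y\!=\!Z, F(X;Y)$ introduces no atom absent from the old body, and its governing implication is precisely property (F1) of the total functional conjunction $F(X;Y)$; for R6 the replacement is by the empty conjunction, implied by any formula and again introducing no new atom. For R7 the required implication $M(\textit{Definite}(P_0)\cup\mathit{Defs}_i)\models\forall(c\wedge F(X;Y)\rightarrow\exists W.\,R(V;W)\wedge\mathit{diff}(Z))$ is supplied directly by Lemma~\ref{lemma:R7unsat}.

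The one point that is not mere bookkeeping, and which I expect to be the main obstacle, is the level side-condition of Rule W in the R7 case. Rule W demands $\ell(H)>\ell(A)$ for \emph{every} atom $A$ occurring in the new fragment $R(V;W),\mathit{diff}(Z)$ but not in $F(X;Y)$, whereas condition (iv) of R7 explicitly requires only $\ell(H)>\ell(\mathit{diff})$. The gap is closed by recalling how R1 assigns levels: when the definition $\mathit{diff}(Z)\leftarrow d, F(X;Y), R(V;W)$ was introduced, its level was set to the maximum of the levels of the atoms in its body, so $\ell(\mathit{diff})\geq\ell(A)$ for every atom $A$ of $R(V;W)$. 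Chaining this with (iv) yields $\ell(H)>\ell(\mathit{diff})\geq\ell(A)$, so the side-condition holds for the $R$-atoms as well as for $\mathit{diff}$.

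Finally I would observe that the folding and unfolding steps of the sequence are literally the R3 and R2 steps already present, so Condition (U) is a statement about the very same steps under either reading; assuming it for the R1--R7 sequence, it holds verbatim for the reverse-implication-based one, and no reinterpretation of (U) is needed. Combining these observations, every step $P_i\Rightarrow P_{i+1}$ is an R1, R2, R3, or W step, which is exactly the assertion that $P_0\Rightarrow\ldots\Rightarrow P_n$ is a reverse-implication-based transformation sequence.
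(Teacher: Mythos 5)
Your proposal is correct and follows essentially the same route as the paper: the paper's proof of this lemma is exactly the preceding case analysis, with R1--R3 admitted verbatim and R4--R6 and (via Lemma~\ref{lemma:R7unsat}) R7 each recast as instances of Rule~W. Your handling of the level side-condition in the R7 case --- closing the gap between R7's condition (iv), which constrains only $\ell(\mathit{diff})$, and Rule~W's requirement on \emph{every} atom new to the body, by appealing to R1's assignment $\ell(\mathit{diff})=\max\{\ell(A_i)\}$ over the atoms of the definition's body --- is in fact more careful than the paper, which passes over the atoms of $R(V;W)$ silently.
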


Now Theorem~\ref{thm:unsat-preserv} of Section~\ref{sec:TransfRules}
follows from Lemma~\ref{lemma:R1-R7-rev-impl}
and Theorem~\ref{thm:unsat}. \hfill~$\Box$

\medskip

Finally, we prove the soundness of Algorithm~\Diff, that is, Theorem~\ref{thm:soundness-AlgorithmR} of 
Section~\ref{sec:Strategy}.

Each procedure used in {Algorithm}~\Diff~consists of a sequence of 
applications of Rules R1--R7.
Thus, the thesis follows from Theorem~\ref{thm:unsat-preserv}, if Condition~(U)
of the hypothesis of that theorem holds. 
This condition is satisfied if for each application of the {\it Unfold} procedure,
the nonempty set of source atoms that are marked as unfoldable in the body of a clause,
includes at least one atom with the same level as the head of the clause.
This property can always be enforced by dynamically constructing the level
mapping~$\ell$ during the execution of Algorithm~\Diff.
\hfill~$\Box$

\end{document}